\definecolor{darkgreen}{rgb}{0.0,0,0.9}
\newtcolorbox{wbox}
{
	colback  = white,
}
\newcommand*{\suppress}[1]{}
\let\poly\relax
\DeclareMathOperator{\poly}{poly}
\def\thm@space@setup{%
	\thm@preskip= 10pt
	\thm@postskip=\thm@preskip % or whatever, if you don't want them to be equal
}
\renewcommand{\paragraph}{%
	\@startsection{paragraph}{4}%
	{\z@}{5pt}{-1em}%
	{\normalfont\normalsize\bfseries}%
}
\newtheorem{theorem}{Theorem}
\newtheorem{lemma}{Lemma}
\newtheorem{definition}{Definition}
\newtheorem{proposition}[theorem]{Proposition}
\newtheorem{question}[theorem]{Question}
\theoremstyle{definition}
\newtheorem{remark}[theorem]{Remark}
\newenvironment{fminipage}%
{\begin{Sbox}\begin{minipage}}%
		{\end{minipage}\end{Sbox}\fbox{\TheSbox}}
\def\intersect{\cap}
\def\calL{\mathcal{L}}
\def\calW{\mathcal{W}}
\def\calF{\mathcal{F}}
\def\calM{\mathcal{M}}
\newcommand{\Lc}{\mathcal{L}}
\newcommand*\samethanks[1][\value{footnote}]{\footnotemark[#1]}
\title{Stable Matching: Dealing with Changes in Preferences}
 \author[1]{Rohith Reddy Gangam\thanks{This work was supported in part by NSF grant CCF-2230414.}}
 \author[2]{Tung Mai}
 \author[3]{Nitya Raju\footnote{This work was done while the author was a student at the University of California, Irvine.}}
 \author[1]{Vijay V.~Vazirani\samethanks[1]}
 \affil[1]{University of California, Irvine}
 \affil[2]{Adobe Research}
 \affil[3]{University of Maryland, College Park}
\date{}
\begin{document}
\maketitle
    
\begin{abstract}
We study stable matchings that are robust to preference changes in the two-sided stable matching setting of Gale and Shapley~\cite{GaleS}. Given two instances $A$ and $B$ on the same set of agents, a matching is said to be \emph{robust} if it is stable under both instances. While prior work has considered the case where a {\em single agent} changes preferences between $A$ and $B$, we allow {\em multiple agents on both sides} to update their preferences and ask whether three central properties of stable matchings extend to robust stable matchings: (i) Can a robust stable matching be found in polynomial time? (ii) Does the set of robust stable matchings form a lattice? (iii) Is the fractional robust stable matching polytope integral?

We show that all three properties hold when any number of agents on one side change preferences, as long as at most one agent on the other side does. For the case where two or more agents on both sides change preferences, we construct examples showing that both the lattice structure and polyhedral integrality fail—identifying this setting as a sharp threshold. We also present an XP-time algorithm for the general case, which implies a polynomial-time algorithm when the number of agents with changing preferences is constant. While these results establish the tractability of these regimes, closing the complexity gap in the fully general setting remains an interesting open question.
\end{abstract}

\section{Introduction}
\label{sec.intro}

Matchings under preferences form a fundamental area of research in algorithmic game theory, with the seminal 1962 paper by Gale and Shapley \cite{GaleS} launching a rich line of work on stable matchings. Over the decades, this field has revealed elegant structural, algorithmic, and strategic properties, leading to real-world applications ranging from labor markets to school choice \cite{Book-Online}. Unlike the setting defined in \cite{GaleS}, we note that  in practice, preferences of agents are often not static. They may change as agents gain more information, or because of external factors such as agents entering or leaving the market. Agents may also collude with others in hopes of securing better matches. Such considerations motivate the study of matchings that remain stable under preference changes.  

Suppose instance $A$ represents the current preferences and $B$ captures the changed instance. We are interested in matchings that are stable in both $A$ and $B$, these are called \emph{robust stable matchings}. A recent thread of work has considered the case where $A$ and $B$ are “nearby” instances—those differing only in the preference list of a {\em single agent}. Mai and Vazirani~\cite{MV.robust} and Gangam et al.~\cite{GMRV-nearby-instances} characterized the structure of stable matchings under such perturbations and gave efficient algorithms for identifying matchings that are stable under both instances. For this case, they showed that the  intersection of the sets of the stable matchings, $\mathcal{M}_A \cap \mathcal{M}_B$, is a sublattice of the lattices of the two instances, $\Lc_A$ and $\Lc_B$. They also provided efficient algorithms to find and enumerate these matchings.

Building on \cite{MV.robust}, Chen et al.~\cite{Chen-Matchings-Under-preferences}  introduced a model of robustness based on swap distance. A matching is defined to be $d$-robust if it remains stable under any $d$ swaps in the preference lists, even over multiple agents. They gave an efficient algorithm for deciding the existence of such matchings by exploiting the structure of the stable matching lattice. 

In this work, we allow multiple agents to make arbitrary changes and we study structural as well as algorithmic issues. Interestingly enough, the threshold at which structural properties break is very sharp, and we identify it.

\subsection{Our contributions}
\label{subsec.outlineresults}

From our perspective, the following is a central question. Its importance stems from the fact that the stable matching lattice is {\em universal} in the class of finite distributive lattices in the following sense: Given any lattice $\Lc$ from this class, there exists a stable matching instance, $A$, of an appropriate size, such that its lattice $\Lc_A$ is isomorphic to $\Lc$.  Consequently, an affirmative answer to the question below would imply a fundamental new property for the class of finite distributive lattices.

\begin{question}
\label{ques.one}
Is $\mathcal{M}_A \cap \mathcal{M}_B$ always a sublattice of $\Lc_A$ and $\Lc_B$ under arbitrary preference changes?
\end{question}

In Theorem~\ref{thm:both_sides_not_sublattice} we show that, in general, the answer is ``no''.  This leads to the natural question of identifying the cases in which this property does hold, and whether we can exploit it to obtain efficient algorithms. 

Consider a robust stable matching instance $(A,B)$ in which $p$ of the $n$ workers and $q$ of the $n$ firms change their preferences in going from $A$ to $B$. We show that the above statement fails when $(p, q) = (2, 2)$. 

In addition to structural questions, we study computational and geometric aspects of robust stable matchings:
\begin{itemize}
    \item Is the decision problem $\mathcal{M}_A \cap \mathcal{M}_B \neq \emptyset$ solvable in polynomial time?
    \item Can the worker-optimal and firm-optimal matchings in $\mathcal{M}_A \cap \mathcal{M}_B$ be efficiently computed?
    \item Does the set $\mathcal{M}_A \cap \mathcal{M}_B$ admit an efficient Birkhoff-partial order and support efficient enumeration?
    \item Is the intersection of stable matching polytopes integral, enabling linear programming(LP) based algorithms?
\end{itemize}

Our results, summarized in Figure~\ref{fig:summary_table}, characterize the structural and algorithmic behavior of $\mathcal{M}_A \cap \mathcal{M}_B$ as a function of $p$ and $q$, the number of workers and firms changing preferences. When $(p,q) = (n,1)$ or $(1,n)$—that is, any number of agents on one side and at most one on the other—the intersection remains a sublattice, the robust stable matching polytope remains integral, and both the worker-optimal and firm-optimal matchings in $\mathcal{M}_A \cap \mathcal{M}_B$ can be computed efficiently via combinatorial algorithms. In addition, LP-based methods can be used to find a robust stable matching in these cases. When changes are restricted to one side entirely, i.e., $(p,q) = (0,n)$ or $(n,0)$, we also obtain a succinct Birkhoff representation that supports enumeration with polynomial delay. While the partial order used for enumeration is efficiently computable in these one-sided cases, its computation \emph{remains open} for the two-sided $(1,n)$ or $(n,1)$ settings. Here, we show that it can be computed in polynomial time for $(1,n)$ if and only if it can be for $(1,1)$.

In contrast, we identify a sharp structural threshold at $(p,q) = (2,2)$: beyond this point, the intersection $\mathcal{M}_A \cap \mathcal{M}_B$ may no longer form a sublattice, and the integrality of the associated polytope can fail. For general $(p,q)$ instances, we present an $O(n^{p+q+2})$-time XP algorithm that decides whether a robust stable matching exists, constructs one if it does, and enumerates all such matchings with the same asymptotic delay. This yields a polynomial-time algorithm and enumeration when the number of agents with differing preferences is bounded by a constant $k$ (i.e., $p + q = k$). Determining the precise threshold for the computational complexity in the general case remains an intriguing open question.
% Our results are summarized in \Cref{fig:summary_table}. We characterize each of the properties above based on $p$ and $q$, the number of workers and firms changing preferences. When $(p,q)=(n,1)$ or $(1,n)$, i.e., when any number of agents change preferences on one side, as long as at most one agent on the other side changes preferences, we show that the lattice structure and the integrality of the polytope hold. We give combinatorial algorithms that efficiently compute the worker and firm optimal robust stable matchings within this lattice. We also obtain LP based efficient algorithms to obtain robust stable matchings. Moreover, if only agents on one-side changes preferences, i.e., $(p,q) = (0,n)$ or $(n,0)$, we give efficient characterization of the lattice the allows us to enumerate the set of all robust stable matchings with polynimal delay. On the negative side, we show that when $(p,q)=(2,2)$, i.e., when at least two agents from each side change preferences, both the lattice structure and integrality of the polytopes break giving a sharp threshold for the feasibility of these properties. Our algorithms also do not work at this point and the computational complexity of finding robust stable matchings in this setting remains an interesting open question.
\begin{figure}[ht]
\centering
% \captionsetup{justification=centering}
\renewcommand{\arraystretch}{1.3}
\begin{tabular}{|c|c|c|c|}
\hline
\textbf{$(p, q)$} & \textbf{Computation (P?)} & \textbf{Structure (Lattice?)} & \textbf{Geometry   (Integral?)} \\ \midrule
% \hline
% (0,0) & P~\cite{GaleS} & Yes~\cite{irving2} & Yes~\cite{Teo-Sethuraman-Lp} \\
(0,1) & P~\cite{GMRV-nearby-instances} & Yes~\cite{GMRV-nearby-instances} & Yes~[Thm.~\ref{thm:n_1_integral_polytope}] \\
(0,n) & P~[Thm.~\ref{thm:n_1_algo_works}] & Yes~[Thm.~\ref{thm:birkhoff}] & Yes~[Thm.~\ref{thm:n_1_integral_polytope}] \\
(1,1) & P~[Thm.~\ref{thm:n_1_algo_works}] & \cellcolor{yellow!25}Yes~[Thm.~\ref{thm:sublattice_two_side}] & Yes~[Thm.~\ref{thm:n_1_integral_polytope}] \\
(1,n) & P~[Thm.~\ref{thm:n_1_algo_works}] & \cellcolor{yellow!25}Yes~[Thm.~\ref{thm:sublattice_two_side}] & Yes~[Thm.~\ref{thm:n_1_integral_polytope}] \\
(2,2) & P~[Thm.~\ref{thm:robust_xp}] & No~[Thm.~\ref{thm:both_sides_not_sublattice}] & No~[Thm.~\ref{thm:2_2_not_integral_polytope}] \\
% \hline
($p$,$q$) & $O(n^{p+q+2})$~[Thm.~\ref{thm:robust_xp}] & - & - \\
% \hline
($n$,$n$) & NP-Complete~\cite{NP-Two-stable} & No~[Thm.~\ref{thm:both_sides_not_sublattice}] & No~[Thm.~\ref{thm:2_2_not_integral_polytope}] \\
\hline
\end{tabular}
\vspace{0.5em}
\caption{
Summary of our results on robust stable matchings. 
$p$ workers and $q$ firms permute their preference lists from instance $A$ to $B$.\\
The ``Computation'' column deals with the decision problem $\mathcal{M}_A \cap \mathcal{M}_B \neq \varnothing $. \\
The ``Structure'' column answers if $\mathcal{M}_A \cap \mathcal{M}_B$ is a sublattice of $\Lc_A$ and $\Lc_B$. \\
The ``Geometry'' column answers if the robust fractional stable matching polytope is integral. 
}
\label{fig:summary_table}
\end{figure}

% ______________________________________________

% Our results are summarized below. 

% \begin{enumerate}
%     \item For $(0, n)$, the set $\mathcal{M}_A \cap \mathcal{M}_B$ is a sublattice of both $\Lc_A$ and $\Lc_B$. We provide specialized algorithms, building on the Gale-Shapley deferred acceptance algorithm, to compute the worker-optimal and firm-optimal matchings in this sublattice. We also derive the corresponding partial order and enumerate all matchings with polynomial delay.
%     \item For $(1, n)$, we again obtain a sublattice and give efficient algorithms for decision and optimization problems. We prove that the partial order can be constructed in polynomial time for $(1, n)$ if and only if it can be for $(1, 1)$.
%     \item For $(2, 2)$, we give an example where $\mathcal{M}_A \cap \mathcal{M}_B$ is not a sublattice of $\Lc_A$, and where the intersection of the corresponding fractional polytopes is not integral.
%     \item For $(0,0)$ through $(1,n)$, we show that the intersection of the stable matching polytopes of $A$ and $B$ is integral, enabling a polyhedral method to recover an integral robust stable matching.
% \end{enumerate}

% \todo[inline]{RG: Update the table to show the results on obtaining Birkhoff's partial order.}
% \todo[inline]{RG: Edit all pictures showing examples so that they look proper.}

\section{Related Work}
\label{sec:related_work}

The stable matching problem was introduced by Gale and Shapley~\cite{GaleS}, who also proposed the celebrated Deferred Acceptance (DA) algorithm. The DA algorithm computes a stable matching that is optimal for one side of the market and pessimal for the other. The set of stable matchings forms a distributive lattice, as shown in~\cite{knuth1976marriages}, and this structure has played a central role in many algorithmic developments. Key game-theoretic properties such as the Rural Hospitals Theorem~\cite{Roth-IC-1982economics} and incentive compatibility of the DA algorithm~\cite{DubinsF} have also been established. The integrality of the stable matching polytope was shown by Teo and Sethuraman~\cite{Teo-Sethuraman-Lp}, with alternative formulations given in~\cite{VANDEVATE, Roth85}. For comprehensive treatments of these topics, we refer the reader to~\cite{Knuth, GusfieldI, Manlove-book}.

Mai and Vazirani~\cite{MV.robust} introduced the notion of robustness used in this paper, defining a matching to be robust if it remains stable under both the original and a perturbed instance. They provided polynomial-time algorithms for the case where a single agent modifies preferences via a simple downshift. This was generalized by Gangam et al.~\cite{GMRV-nearby-instances} to arbitrary changes by one agent. Chen et al.~\cite{Chen-Matchings-Under-preferences} introduced the notion of $d$-robust stable matchings, where up to $d$ swaps are allowed in the preference profiles, and also studied the trade-offs between social welfare and stability. Miyazaki and Okamoto~\cite{NP-Two-stable} have shown that when all agents are allowed to change preferences from one instance to another, finding a robust stable matching---which they term a jointly stable matching---is NP-hard.

Building on this work, we consider the case where multiple agents on both sides simultaneously change their preferences via arbitrary permutations. While earlier approaches were largely based on lattice structure, rotation posets, or variants of the DA algorithm, our approach also incorporates linear programming (LP) techniques into the analysis.

When only one side of the market changes preferences, the setting is closely related to stable matching with strict preferences on one side and weak or partial preferences on the other~(see Appendix~\ref{app:one_side}). In such settings, several refined notions of stability have been studied, including weak, strong, and super-strong stable matchings. Irving~\cite{IRVING-Indifferences} showed that weakly stable matchings always exist and can be computed in polynomial time. Spieker~\cite{SPIEKER-indifference-lattice} showed that the set of super-stable matchings forms a distributive lattice. Manlove~\cite{MANLOVE-indifference-lattice} extended this to strong stability, showing that the corresponding set of matchings also retains a lattice structure. Kunysz et al.~\cite{Kunysz-ssm} gave an efficient characterization of strongly stable matchings via succinct partial orders.

Aziz et al.~\cite{aziz1, aziz2} studied uncertain preferences and proposed robust solutions that perform well across all completions of partial orders. Genc et al.~\cite{genc2, genc1} proposed $(x, y)$-supermatches, which allow re-stabilizing the matching after any $x$ agents break up by rematching them while affecting at most $y$ other pairs. While not robust in our sense, they provide a notion of ease of repair.

Chen et al.~\cite{NP-Two-stable-Chen} introduced stable matching under multi-modal preferences, where each agent evaluates potential matches using multiple criteria. Menon and Larson~\cite{Menon_Larson} addressed the problem of minimizing the maximum number of blocking pairs across all completions of weak orders. Incremental changes to preferences over time have also been studied. Bredereck et al.~\cite{BredereckCKLN20} and Boehmer et al.~\cite{boehmer_incremental_mfcs, boehmer_incremental_aaai} explored the complexity of adapting stable matchings as inputs evolve, providing structural results and efficient algorithms in restricted settings.

Popularity is a voting-based relaxation of stability first introduced by Gärdenfors~\cite{Gard75a}. A matching is popular if it does not lose a head-to-head election against any other matching. Popular matchings have been extensively studied. Recently, Bullinger et al.~\cite{robust_pop_matchings_BGS} introduced the notion of robust popular matchings and gave polynomial-time and hardness results based on the extent of changes in preferences. Csáji~\cite{Csaj24a} studied robust popular matchings in the presence of multi-modal and uncertain preferences.

Robustness to input perturbations has also been studied in voting theory~\cite{FaRo15a, SYE13a, BFK+21a}, where it is often interpreted through the lens of swap-bribery. The cost of altering votes is typically measured via the swap distance~\cite{EFS09a}. Similar ideas have been applied to stable matchings~\cite{BBHN21a}, where preference manipulations are used to enforce or prevent certain matches. Bérczi et al.~\cite{BERCZI} study how one can deliberately change preferences to enforce the existence of stable matchings with desired properties.
% Robustness under perturbations has also been studied in computational social choice. \cite{FaRo15a,SYE13a,BFK+21a} analyzed robustness of election outcomes under small changes, often modeled as swap-bribery. Elkind et al.~\cite{EFS09a} studied distance rationalizability, wherein robustness is evaluated with respect to the minimum number of swaps required to alter an outcome. These ideas have also been brought into the stable matching setting by Boehmer et al.~\cite{BBHN21a}, who studied how preference manipulations can be used to enforce or avoid specific outcomes. Bérczi et al.~\cite{BERCZI} further studied how to deliberately change preferences (via deletions or perturbations) to ensure the existence of stable matchings with desirable properties.

% _______________________________________________________________

\section{Preliminaries}
\label{sec.prelim}

\subsection{The stable matching problem and the lattice of stable matchings}
\label{subsection.latticeOfSM}

A stable matching problem instance consists of a set of $n$ workers, $\calW = \{w_1, w_2, \ldots, w_n\}$, and a set of $n$ firms, $\calF = \{f_1, f_2, \ldots, f_n\}$, collectively referred to as agents. Each agent $a \in \calW \cup \calF$ has a preference profile $>_a$, which is a strict total order over the agents of the opposite type. For example, $w_i >_f w_j$ indicates that firm $f$ strictly prefers worker $w_i$ to $w_j$. Worker preferences are expressed analogously.

A matching $M$ is a one-to-one correspondence between $\calW$ and $\calF$. For each pair $(w, f) \in M$, $w$ is called the partner of $f$ in $M$ (or $M$-partner), and vice versa. A pair $(w, f) \notin M$ is a \emph{blocking pair} if $f >_w M(w)$ and $w >_f M(f)$; that is, if both strictly prefer each other to their respective partners in $M$. A matching is \emph{stable} under the instance if it admits no blocking pairs.

% Define the relation {\em predecessor} as the transitive closure of dominates. 
Let $M$ and $M'$ be two stable matchings. We say that $M$ \emph{dominates} $M'$, denoted $M \preceq M'$, if every worker weakly prefers their partner in $M$ to their partner in $M'$. In this case, $M$ is also called a \emph{predecessor} of $M'$. A stable matching $M$ is a \emph{common predecessor} of two stable matchings $M_1$ and $M_2$ if it is a predecessor of both. It is a \emph{lowest common predecessor} of $M_1$ and $M_2$ if no other common predecessor $M'$ satisfies $M \preceq M'$. Analogously, one can define the notions of \emph{successor} and \emph{highest common successor}.  

This dominance partial order has the following key property: for any two stable matchings $M_1$ and $M_2$, their lowest common predecessor and highest common successor are unique. That is, the set of stable matchings forms a \emph{lattice} under this partial order. The lowest common predecessor is called the \emph{meet}, denoted $M_1 \wedge M_2$, and the highest common successor is called the \emph{join}, denoted $M_1 \vee M_2$.

One can show that $M_1 \wedge M_2$ is the matching obtained when each worker chooses their more preferred partner from $M_1$ and $M_2$; it is easy to verify that this matching is also stable. Interestingly, $M_1 \wedge M_2$ also results when each firm chooses their less preferred partner from $M_1$ and $M_2$.  

Similarly, $M_1 \vee M_2$ is the matching in which each worker (respectively, firm) chooses their less (respectively, more) preferred partner from $M_1$ and $M_2$, and this matching is also stable.  

This duality implies that all the above definitions and lattice properties hold symmetrically when phrased in terms of firms instead of workers. Moreover, the lattice operations \emph{join} and \emph{meet} distribute: given three stable matchings $M, M', M''$,
\begin{align*}
    M \vee (M' \wedge M'') &=  (M \vee M') \wedge (M \vee M'') \\
    M \wedge (M' \vee M'') &= (M \wedge M') \vee (M \wedge M'').
\end{align*}

The lattice of stable matchings contains a unique matching $M_{\top}$ that dominates all others, and a unique matching $M_{\bot}$ that is dominated by all others. The matching $M_{\top}$ is called the \emph{worker-optimal stable matching}, as each worker is matched to their most preferred firm among all stable matchings. This is also the \emph{firm-pessimal stable matching}. Similarly, $M_{\bot}$ is the \emph{worker-pessimal} and \emph{firm-optimal stable matching}. Since the number of stable matchings under any instance is finite, the stable matching lattice is a \emph{finite distributive lattice}.

\subsection{Birkhoff’s Theorem, sublattices and compressions}
\label{sec.birkhoff}

\begin{definition}
\label{def:closedset}
A closed set of a partially ordered set (poset) is a subset $S$ such that, if an element is in $S$, then all of its predecessors are also in $S$.
\end{definition}

The collection of closed sets (also called lower sets) of a partial order $\Pi$ is closed under union and intersection, and forms a distributive lattice, with join and meet corresponding to these two operations, respectively. We denote this lattice by $L(\Pi)$.  
Birkhoff's theorem~\cite{Birkhoff}, also known as the \emph{fundamental theorem for finite distributive lattices} (e.g., see~\cite{Stanley}), states that for any finite distributive lattice $\Lc$, there exists a partial order $\Pi$ such that $L(\Pi) \cong \Lc$, i.e., the lattice of closed sets of $\Pi$ is isomorphic to $\Lc$. We say that $\Pi$ \emph{generates} $\Lc$.

\begin{theorem}[Birkhoff~\cite{Birkhoff}]
\label{thm:Birkhoff}
Every finite distributive lattice $\Lc$ is isomorphic to $L(\Pi)$, for some finite poset $\Pi$.
\end{theorem}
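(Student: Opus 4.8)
The plan is to realize $\Lc$ concretely through its \emph{join-irreducible} elements. Call an element $j \in \Lc$ join-irreducible if it is not the least element $\hat{0}$ of $\Lc$ and whenever $j = a \vee b$ one has $j = a$ or $j = b$ (in a finite lattice this is equivalent to $j$ having a unique lower cover). Let $J(\Lc)$ be the set of join-irreducibles, ordered by the restriction of the order $\leq$ of $\Lc$, and take $\Pi = J(\Lc)$. Define $\phi : \Lc \to L(\Pi)$ by $\phi(x) = \{\, j \in J(\Lc) : j \leq x \,\}$. If $j \in \phi(x)$ and $j' \leq j$, then $j' \leq x$, so $\phi(x)$ is a down-set of $\Pi$, hence a closed set in the sense of Definition~\ref{def:closedset}, and $\phi$ is well defined. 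I would then prove that $\phi$ is a lattice isomorphism, which gives $\Lc \cong L(\Pi)$ as required.

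The first step is the representation lemma: in a finite lattice every element equals the join of the join-irreducibles beneath it, i.e. $x = \bigvee \phi(x)$ for all $x$. This follows by induction: for $x = \hat{0}$ we have $\phi(x) = \emptyset$ and $\bigvee \emptyset = \hat{0}$; if $x$ is join-irreducible then $x$ is the greatest element of $\phi(x)$; and any other $x$ splits as $x = a \vee b$ with $a, b < x$, whence the claim is inherited from $a$ and $b$. Given this, injectivity of $\phi$ is immediate, since $\phi(x) = \phi(y)$ forces $x = \bigvee \phi(x) = \bigvee \phi(y) = y$. Note that distributivity has not been used yet; the representation lemma holds in any finite lattice.

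The heart of the argument, and the step that genuinely requires distributivity, is the claim that for a join-irreducible $j$ and any $x, y \in \Lc$, if $j \leq x \vee y$ then $j \leq x$ or $j \leq y$. Here I would write $j = j \wedge (x \vee y) = (j \wedge x) \vee (j \wedge y)$ using the distributive law, and then invoke join-irreducibility of $j$ to conclude $j = j \wedge x$ or $j = j \wedge y$, i.e. $j \leq x$ or $j \leq y$. The same device, applied to an arbitrary finite join $\bigvee S$ via the finite form of distributivity, shows that any join-irreducible below $\bigvee S$ already lies below some element of $S$.

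With this claim in hand the remaining steps are routine. For surjectivity, given a closed set $S \subseteq \Pi$ set $x = \bigvee S$; then $S \subseteq \phi(x)$ trivially, while any $j \in \phi(x)$ satisfies $j \leq \bigvee S$, hence $j \leq s$ for some $s \in S$ by the claim, so $j \in S$ because $S$ is closed, giving $\phi(x) = S$. For the homomorphism property, $\phi(x \wedge y) = \phi(x) \cap \phi(y)$ holds in any lattice since $j \leq x \wedge y$ iff $j \leq x$ and $j \leq y$, while $\phi(x \vee y) = \phi(x) \cup \phi(y)$ is exactly the claim above (the reverse inclusion being monotonicity). A bijection preserving meet and join is a lattice isomorphism, completing the proof. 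The one point demanding care is precisely the distributive step: without it a join-irreducible can lie below $x \vee y$ without lying below either $x$ or $y$, and both surjectivity and join-preservation break down.
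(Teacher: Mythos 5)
Your proof is correct and complete: it is the standard proof of Birkhoff's representation theorem via the poset of join-irreducibles, with the distributive law invoked exactly where it is needed (to show a join-irreducible below $x \vee y$ lies below $x$ or $y$). The paper itself states this theorem only as a cited classical result and offers no proof of its own, so there is nothing to compare against; your argument is the canonical one and all steps, including the induction for $x = \bigvee \phi(x)$ and the surjectivity of $\phi$, are sound.
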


A \emph{join semi-sublattice} $\Lc_j$ of a distributive lattice $\Lc$ is a subset such that, for any $x, y \in \Lc_j$, the join $x \vee y$ also belongs to $\Lc_j$.  
Similarly, a \emph{meet semi-sublattice} $\Lc_m$ is a subset such that, for any $x, y \in \Lc_m$, the meet $x \wedge y$ also belongs to $\Lc_m$.  
A \emph{sublattice} $\Lc'$ of $\Lc$ is a subset that is both a join and meet semi-sublattice.

The \emph{Hasse diagram} of a poset is a directed graph with a vertex for each element in the poset and an edge from $x$ to $y$ if and only if $x \prec y$ and there is no $z$ such that $x \prec z \prec y$, i.e., all precedence relations implied by transitivity are suppressed.

Let $\Lc$ be the lattice generated by a poset $\Pi$, and let $H(\Pi)$ denote the Hasse diagram of $\Pi$. To derive a new poset $\Pi'$, consider the following operations: choose a set $E$ of directed edges and add them to $H(\Pi)$. Let $H_E$ be the resulting graph. Let $H'$ be the graph obtained by shrinking the strongly connected components of $H_E$. Define $\Pi'$ to be the partial order induced by the non-shrunk edges of $H'$ on its nodes. The poset $\Pi'$ is called a \emph{compression} of $\Pi$.

\begin{theorem}[\cite{GMRV-nearby-instances}, Theorem 1]
\label{thm:generalization} 
There is a one-to-one correspondence between the compressions of $\Pi$ and the sublattices of $L(\Pi)$. Moreover, if a sublattice $\Lc' \subseteq L(\Pi)$ corresponds to a compression $\Pi'$, then $\Lc'$ is generated by $\Pi'$.
\end{theorem}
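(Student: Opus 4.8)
The plan is to exhibit the bijection explicitly by passing through an intermediate object: a \emph{preorder} $R$ on the ground set $\mathcal{P}$ of $\Pi$ that refines $\prec_\Pi$. First I would observe that a compression is really determined not by the chosen edge set $E$ but by the reflexive–transitive closure $R=R_E$ of $(\prec_\Pi)\cup E$: two edge sets with the same closure produce the same strongly connected components of $H_E$ and hence the same $\Pi'$, so I identify compressions with preorders $R\supseteq{\prec_\Pi}$. Given such an $R$, the nodes of $\Pi'$ are exactly the $R$-equivalence classes with the induced order, so a lower set of $\Pi'$ pulls back, by taking the union of the classes it contains, to an \emph{$R$-lower set} of $\mathcal{P}$, i.e.\ a set $S$ with $x\in S$ and $u\preceq_R x$ implying $u\in S$. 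This pullback $\phi\colon D\mapsto\bigcup_{C\in D}C$ is an injection $L(\Pi')\to L(\Pi)$ that preserves $\subseteq$, and since the family of $R$-lower sets is closed under union and intersection, its image $\Lc_E:=\{R\text{-lower sets}\}$ is a sublattice of $L(\Pi)$ isomorphic to $L(\Pi')$, establishing the forward map together with the ``generated by'' clause.

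For the reverse direction I would, given a sublattice $\Lc'$ of $L(\Pi)$, read off a preorder directly from $\Lc'$ by declaring $u\preceq_R x$ precisely when every member of $\Lc'$ containing $x$ also contains $u$. This $R$ is reflexive and transitive, and it refines $\prec_\Pi$ because each element of $\Lc'\subseteq L(\Pi)$ is already a $\prec_\Pi$-lower set. The key gadget is the principal set $S_x:=\bigcap\{T\in\Lc':x\in T\}$; because $\Lc'$ is meet-closed and $\mathcal{P}=\hat 1\in\Lc'$ makes the family nonempty, we have $S_x\in\Lc'$, and by construction $S_x=\{u:u\preceq_R x\}$ is exactly the principal $R$-lower set of $x$. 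The heart of the argument is then the identity $S=\bigcup_{x\in S}S_x$ for every $R$-lower set $S$: the inclusion $\supseteq$ holds because $x\in S$ forces $S_x\subseteq S$, and $\subseteq$ holds because $x\in S_x$. Since each $S_x\in\Lc'$ and $\Lc'$ is join-closed, this writes $S$ as a join of members of $\Lc'$, so $S\in\Lc'$; combined with the immediate inclusion $\Lc'\subseteq\{R\text{-lower sets}\}$ this yields $\Lc_R=\Lc'$.

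Finally I would check the two maps are mutually inverse. Starting from a preorder $R$, the forcing relation recomputed from $\Lc_R$ returns $R$, since $u$ lies in every $R$-lower set containing $x$ iff $u\preceq_R x$; starting from a sublattice $\Lc'$ the construction returns $\Lc'$ by the identity above; and since a preorder is recoverable from its family of lower sets, distinct compressions yield distinct sublattices, giving a genuine bijection. I expect the main obstacle to be the reverse inclusion $\Lc_R\subseteq\Lc'$—the assertion that the purely binary forcing constraints defining $R$ already cut out all of $\Lc'$—which is exactly where the full sublattice hypothesis, closure under both union and intersection and hence under all finite joins and meets, is indispensable; the decomposition $S=\bigcup_{x\in S}S_x$ is the device that makes this work. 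One point that must be stated carefully is that the image of every compression contains both $\hat 0=\emptyset$ (the empty down-set) and $\hat 1=\mathcal{P}$ (the full down-set), so the bijection is with those sublattices of $L(\Pi)$ that share the bottom and top of $L(\Pi)$; this already matches, for instance, the four compressions and four such sublattices of the two-element antichain. The underlying distributive structure enters only through the fact that meet and join in $L(\Pi)$ are intersection and union, which is precisely what lets the sublattice hypothesis be used set-theoretically.
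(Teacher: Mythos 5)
The paper states this result only as a citation to \cite{GMRV-nearby-instances} and gives no proof, so there is no internal argument to compare against; judged on its own terms, your proof is correct and is the natural Birkhoff-style argument: identify a compression with the reflexive--transitive closure $R$ of $H(\Pi)\cup E$, send it to the sublattice of $R$-lower sets, and invert via the forcing preorder, with the decomposition $S=\bigcup_{x\in S}S_x$ of an $R$-lower set into the principal members $S_x=\bigcap\{T\in\Lc' : x\in T\}$ doing the real work. The caveat you flag is, however, not cosmetic: the family of $R$-lower sets always contains $\emptyset$ and the whole ground set, so what you have actually proved is a bijection with the sublattices of $L(\Pi)$ that contain $\hat 0$ and $\hat 1$. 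Under the definitions of Section~\ref{sec.birkhoff} the unrestricted statement is false --- already for a one-element poset there is a single compression but three nonempty sublattices --- and since the paper applies the theorem to sublattices such as $\Mc_A\cap\Mc_B$ inside $\Lc_A$, which need not contain the worker- or firm-optimal matchings of $A$, the version actually needed (and presumably the one proved in the cited source) must work with an augmented poset: adjoin to $\Pi$ an element forced into every closed set and one forced out of every closed set, so that a compression can merge ordinary elements with these and thereby shift the top and bottom of the generated sublattice. Your argument goes through verbatim in that augmented setting, but as written it establishes a correctly qualified variant of the statement rather than the statement itself, so you should either add the augmentation or retain the explicit $\hat 0,\hat 1$ hypothesis. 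Two small points to state rather than leave implicit: $S_x\in\Lc'$ uses closure under iterated binary meets (harmless by finiteness), and the case $S=\emptyset$ of the decomposition needs $\emptyset\in\Lc'$; both are supplied only by that added hypothesis.
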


Let $\Lc'$ be the sublattice of $\Lc$ generated by $\Pi'$. We say that a set of edges $E$ \emph{defines} $\Lc'$.\label{def:edge} Note that while a given edge set $E$ uniquely determines the sublattice $L(\Pi')$, multiple such edge sets may define the same sublattice.

One way to define a partial order for the stable matching lattice is through the concept of \emph{rotations}. The partial order on rotations can be computed efficiently and admits a succinct representation, even when the lattice itself is exponentially large~\cite{irving2}. As described in~\cite{GusfieldI}, the rotation poset supports efficient enumeration of all stable matchings.

Compressions defined on the rotation poset characterize sublattices of the original lattice by grouping rotations into \emph{meta-rotations} and imposing a partial order on them. By using the poset of meta-rotations instead of the full rotation poset, one can efficiently enumerate matchings in the corresponding \emph{sublattice}. (Additional details about the rotation poset appear in \Cref{app:rotations}, though they are not needed for the main results of this paper.)

\subsection{Robust Stable Matchings}

A robust stable matching instance consists of $k$ stable matching instances $A_1, A_2, \ldots, A_k$ with $k \geq 2$, each defined on the same set of $2n$ agents: $n$ workers $\calW = \{w_1, w_2, \ldots, w_n\}$ and $n$ firms $\calF = \{f_1, f_2, \ldots, f_n\}$, where all preference lists are strict and complete.

We say that the instances are of type $(p, q)$ (or just are $(p,q)$) if there exists an instance $A_i$ such that for every $j \in \{1, 2, \ldots, k\}$, the preferences in $A_j$ differ from those in $A_i$ for at most $p$ workers and $q$ firms, and remain the same for the other $2n - p - q$ agents. The instance $A_i$ is referred to as the \emph{original} instance, and the others are called the \emph{changed} instances.

\begin{definition}
    Given a robust stable matching instance $(A_1, A_2, \ldots, A_k)$, a matching is said to be \emph{robust stable} under these instances if it is stable under each instance $A_i$ for all $i \in \{1, 2, \ldots, k\}$.
\end{definition}

We focus primarily on the case where $k = 2$, and let $(A, B)$ be a robust stable matching instance of type $(p, q)$, defined on $\calW = \{w_1, w_2, \ldots, w_n\}$ and $\calF = \{f_1, f_2, \ldots, f_n\}$. Here, $A$ is the original instance and $B$ is the changed instance. The results are symmetric in $p$ and $q$.

For any stable matching instance $I$, let $\calM_I$ denote the set of all stable matchings and $\calL_I$ the corresponding lattice of stable matchings. Given a pair $(A, B)$ of type $(p, q)$, our goal is to understand the structure and computational tractability of the robust stable matching set $\calM_A \cap \calM_B$, i.e., the matchings that are stable in both $A$ and $B$. Specifically, we investigate whether $\calM_A \cap \calM_B$ forms a sublattice of $\calL_A$ (and of $\calL_B$), whether such a matching can be found efficiently, and whether linear programming techniques can be leveraged to compute one.

\subsection{Lattices of nearby instances}
\label{sec.lattice}

Consider two instances $A$ and $B$ that are of type $(0,1)$ or $(1,0)$. As the number of matchings stable under an instance can be exponential in $n$, checking the stability of every matching in $\mathcal{M}_A$ with respect to $B$ may require exponential time. The results of~\cite{GMRV-nearby-instances}, which reveal the structure of these robust stable matchings, are therefore essential and form the foundation for our work. We recall them below (statements are modified to match our notation).

\begin{theorem}[\cite{GMRV-nearby-instances}, Theorem 29]
\label{thm:bouquet}
If a lattice $\calL$ can be partitioned into a sublattice $\calL_1$ and a semi-sublattice $\calL_2$, and there is a polynomial-time oracle that determines whether any $x \in \calL$ belongs to $\calL_1$ or $\calL_2$, then an edge set defining $\calL_1$ can be found in polynomial time.
\end{theorem}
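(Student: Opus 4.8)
The plan is to recast the problem in the language of the rotation poset and reduce it to a purely combinatorial recovery question. First I would compute the rotation poset $\Pi$ of $\calL$, so that $\calL \cong L(\Pi)$ with $\Pi$ having only polynomially many elements. By \Cref{thm:generalization}, the sublattice $\calL_1$ corresponds to a unique compression $\Pi'$ of $\Pi$, hence \emph{some} defining edge set $E$ exists; the task is therefore algorithmic, namely to recover such an $E$ using only the membership oracle. The crucial budgetary observation is that every candidate directed edge joins two rotations, so there are only $O(|\Pi|^2)$ candidates, which is polynomial even though $|\calL|$ may be exponential. As a preprocessing step I would locate the extremes $b=\bigwedge\calL_1$ and $t=\bigvee\calL_1$, both of which lie in $\calL_1$ since it is a sublattice, and restrict attention to the interval $[b,t]$, inside which $\calL_1$ sits as a sublattice containing its own bottom and top; the rotations below $b$ are forced always-present and those outside $t$ always-absent, which the compression records through SCCs with the appropriate sources and sinks.

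The engine of the algorithm would be a greedy ascent that exploits the hypothesis that $\calL_2=\calL\setminus\calL_1$ is a join semi-sublattice. For any closed set $S_0$, the family $\{S\in\calL_2 : S\supseteq S_0\}$ is closed under join and hence, being finite, has a unique maximum, which can be reached monotonically: starting from a closed set known to be in $\calL_2$, I would take the join of all covering closed sets that the oracle still reports to be in $\calL_2$ (this join again lies in $\calL_2$ by join-closure), and iterate. Each such ascent terminates in at most $|\Pi|$ steps and spends polynomially many oracle calls. Using this primitive, for each rotation $v$ I would push the down-closure $\{u : u\preceq v\}$ up to the frontier of $\calL_1$ and, invoking meet-closure of $\calL_1$, read off the least element $m_v\in\calL_1$ in which $v$ is applied.

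With the $m_v$ in hand, the edge set assembles immediately: include a directed edge $u\to v$ precisely when $u\in m_v$, i.e.\ when applying $v$ inside $\calL_1$ forces $u$ as well; rotations that force one another (mutual edges) collapse into the meta-rotations, i.e.\ the SCCs of the compression. For correctness I would argue that this edge set encodes exactly the implications ``$v$ present $\Rightarrow u$ present'' that are valid throughout $\calL_1$; since a defining edge set exists by \Cref{thm:generalization}, its every implication is valid and therefore present in ours, so our maximal valid set is at least as restrictive and cuts out exactly $\calL_1$ within the interval, giving the desired $\calF=\calL_1$.

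I expect the main obstacle to be the middle step: carrying out all membership tests with only polynomially many oracle queries despite $\calL$ being potentially exponential, and in particular proving that the maximum produced by the greedy ascent is the correct object from which $m_v$ can be extracted. This is exactly where the two closure hypotheses must be used in tandem — the join semi-sublattice property of $\calL_2$ supplies the monotonicity and uniqueness that make the ascent well-defined and terminating, while the meet-closure of the sublattice $\calL_1$ guarantees that the projection $m_v$ exists and is unique. Verifying that these two facts combine to make the greedy frontier coincide with the true boundary between $\calL_1$ and $\calL_2$, and hence that no valid implication is missed and none is spuriously added, is the delicate part of the argument.
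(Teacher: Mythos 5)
This paper does not actually prove Theorem~\ref{thm:bouquet}: it is imported verbatim as Theorem~29 of \cite{GMRV-nearby-instances}, so your argument has to be judged on its own terms rather than against an in-paper proof. Your overall architecture is sensible, and the final assembly is sound: if for every $v\in\Pi$ you could compute $m_v$, the least element of $\calL_1$ containing $v$ (together with $b=\bigwedge\calL_1$ and $t=\bigvee\calL_1$), then the edge set $\{u\to v: u\in m_v\}$ does define $\calL_1$, since any closed set $S$ of the compressed poset with $b\subseteq S\subseteq t$ equals $b\vee\bigvee_{v\in S}m_v$ and hence lies in $\calL_1$ by join-closure, while every member of $\calL_1$ satisfies all these constraints.

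The genuine gap is the middle step, the computation of $m_v$. You propose to saturate the down-closure $D_v$ inside $\calL_2$ (using join-closure of $\calL_2$) to a maximum $S^*$ and then read $m_v$ off the $\calL_1$-frontier above $S^*$. But $m_v$ need not lie above $S^*$: the two can be incomparable, in which case the least $\calL_1$-element above $S^*$ strictly contains $m_v$ and your edge set acquires spurious implications. Concretely, let $\Pi$ be a three-element antichain $\{1,2,3\}$, so $\calL$ is the Boolean lattice of all subsets; let $\calL_1=\{\emptyset\}\cup\{S: 1\in S\}$ (a sublattice) and $\calL_2=\{\{2\},\{3\},\{2,3\}\}$ (a join semi-sublattice). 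For $v=3$ we have $D_3=\{3\}\in\calL_2$; the covers of $\{3\}$ are $\{1,3\}\in\calL_1$ and $\{2,3\}\in\calL_2$, so the ascent reaches $S^*=\{2,3\}$, whose only $\calL_1$-successor is $\{1,2,3\}$ --- yet $m_3=\{1,3\}$. Your procedure would therefore emit the edge $2\to 3$, wrongly excluding $\{1,3\}$ from the computed sublattice. (A secondary worry: the cover-by-cover greedy ascent is not guaranteed to reach the true maximum of $\{T\in\calL_2: T\supseteq D_v\}$ at all, since every cover of the current set may lie in $\calL_1$ while some larger $\calL_2$-element still exists, as on the chain $\emptyset\prec\{1\}\prec\{1,2\}$ with $\calL_1=\{\{1\}\}$.) So the step you flagged as ``delicate'' is not merely delicate --- as described it computes the wrong object, and a different mechanism is needed to extract the valid implications ``$v$ present $\Rightarrow u$ present'' from polynomially many oracle calls.
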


\begin{lemma}[\cite{GMRV-nearby-instances}]
\label{lem:1_0_sublattice}
If instances $A$ and $B$ are of type $(0,1)$ ( or $(1,0)$), then:
\begin{enumerate}
    \item $\calL' = \calM_A \cap \calM_B$ is a sublattice of both $\calL_A$ and $\calL_B$.
    \item $\calM_A \setminus \calM_B$ is a join (or meet) semi-sublattice of $\calL_A$. 
    \item A set of edges defining the sublattice $\calL' = \calM_A \cap \calM_B$ can be computed in polynomial time.
\end{enumerate}
\end{lemma}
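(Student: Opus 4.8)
The plan is to take the three parts in order, exploiting throughout that $A$ and $B$ differ only in the preference list of a single agent. I will describe the $(0,1)$ case, in which one firm $f$ permutes its list; the $(1,0)$ case is symmetric, with the roles of workers and firms (and of meet and join) interchanged, which is exactly what the parenthetical ``(or meet)'' in part 2 records.

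For part 1 the key observation I would isolate is that the lattice operations, \emph{when written in terms of workers}, never mention firm preferences: $M \wedge M'$ is the matching in which each worker keeps its more preferred partner among $M(w)$ and $M'(w)$, and $M \vee M'$ the one in which each worker keeps its less preferred partner. Since every worker has identical preferences in $A$ and $B$, these are literally the \emph{same} functions on pairs of matchings whether computed in $\calL_A$ or in $\calL_B$; write them simply as $\wedge$ and $\vee$. Now if $M, M' \in \calL' = \calM_A \cap \calM_B$, then because $M,M'\in\calM_A$ and $\calM_A$ is a lattice, $M\wedge M', M\vee M'\in\calM_A$; because $M,M'\in\calM_B$, the \emph{same two matchings} lie in $\calM_B$. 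Hence $\calL'$ is closed under $\wedge$ and $\vee$, which simultaneously exhibits it as a sublattice of $\calL_A$ and of $\calL_B$.

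For part 2 I would show $\calM_A \setminus \calM_B$ is closed under the join $\vee$ taken in $\calL_A$. First I would record the basic fact that, since $A$ and $B$ agree except at $f$, any blocking pair in $B$ of a matching stable in $A$ must involve $f$ (a blocking pair avoiding $f$ uses only shared preferences and would already block in $A$). Fix $M, M' \in \calM_A \setminus \calM_B$ and set $N = M \vee M' \in \calM_A$; suppose for contradiction $N \in \calM_B$. After relabelling, assume $f$ weakly prefers $M(f)$ to $M'(f)$ under its $A$-list, so that $N(f) = M(f)$ (join takes $f$'s more preferred partner). Since $M \notin \calM_B$, the basic fact gives a blocking pair $(w,f)$ of $M$ in $B$: $w$ prefers $f$ to $M(w)$, and $f$ prefers $w$ to $M(f)$ under its $B$-list. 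Because $N = M \vee M'$, worker $w$ weakly prefers $M(w)$ to $N(w)$, hence prefers $f$ to $N(w)$; and since $N(f)=M(f)$, firm $f$ prefers $w$ to $N(f)$ under $B$. So $(w,f)$ blocks $N$ in $B$, a contradiction, and therefore $N \in \calM_A \setminus \calM_B$.

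For part 3 I would simply assemble parts 1 and 2: they partition $\calL_A$ as $(\calM_A \cap \calM_B) \sqcup (\calM_A \setminus \calM_B)$ into a sublattice and a join semi-sublattice, exactly the hypothesis of Theorem~\ref{thm:bouquet}. The only missing ingredient is the polynomial-time membership oracle, which is trivial here: given $M \in \calM_A$, test whether $M \in \calM_B$ by scanning all pairs for a block under $B$'s preferences in $O(n^2)$ time. Theorem~\ref{thm:bouquet} then returns an edge set defining $\calL'$ in polynomial time. The genuinely substantive step is part 2; parts 1 and 3 are, respectively, the instance-independence observation and a direct call to the bouquet theorem. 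The delicate point in part 2 is that it is the \emph{join} (not the meet) that works for a firm change: one must choose the side with $N(f)=M(f)$ so that $M$'s blocking pair references the partner $N$ actually assigns to $f$, and one relies on join making the worker's partner weakly worse so that $w$ still prefers $f$ to $N(w)$.
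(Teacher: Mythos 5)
Your proof is correct and follows essentially the same route the paper relies on: part 1 is the worker-preference-invariance argument that the paper itself gives (in the appendix, for the more general $(0,n)$ case of Proposition~\ref{prop.sublattice_one_side}), and part 3 is the intended direct application of Theorem~\ref{thm:bouquet} to the partition of $\calM_A$ into the sublattice $\calM_A\cap\calM_B$ and the semi-sublattice $\calM_A\setminus\calM_B$ with the obvious $O(n^2)$ stability oracle. The paper only cites parts 2 and 3 from \cite{GMRV-nearby-instances} without reproducing proofs, but your argument for part 2 --- every $B$-blocking pair of an $A$-stable matching must involve the changed firm $f$, and the join preserves $f$'s partner from the more-preferred side while making workers weakly worse off --- is sound and correctly identifies why it is the join (not the meet) that works when a firm changes.
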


% \begin{lemma} (\cite{GMRV-nearby-instances})
% \label{lem:1_0_sublattice}
% If instances $A$ and $B$ are $(0,1)$(or $(1,0)$) then
% \begin{enumerate}
    % \item (\cite{GMRV-nearby-instances} Proposition 6)  $\calL' = \calM_A \cap \calM_B$ is a sublattice of both $\calL_A$ and $\calL_B$.
    % \item (\cite{GMRV-nearby-instances} Lemma 31) $\calM_A \setminus \calM_B$ is a join(or meet) semi-sublattice of $\calL_A$. 
    % \item (\cite{GMRV-nearby-instances} Proposition 32) A set of edges defining the sublattice $\calL'=\calM_A \cap \calM_B$ can be computed in polynomial time.
% \end{enumerate}
% \end{lemma}

These results can be extended to multiple instances in the following way. Let $A, B_1, B_2, \ldots, B_k$ be such that each instance $B_i$ is formed from $A$ by arbitrarily permuting the preferences of one agent (possibly a different agent for each $B_i$), i.e., for all $i \in \{1, 2, \ldots, k\}$, the pair $(A, B_i)$ is of type $(1,0)$ or $(0,1)$. \Cref{lem:1_0_sublattice} tells us that each $\calL_i = \calM_A \cap \calM_{B_i}$ is a sublattice of $\calL_A$. Let the edge set defining $\calL_i$ be $E_i$, and define $\calL' = \calM_A \cap \calM_{B_1} \cap \calM_{B_2} \cap \ldots \cap \calM_{B_k}$. Then:
\begin{lemma}
\label{lem:1_0_multiple}
\begin{enumerate}
    \item (\cite{GMRV-nearby-instances}) $\calL'$ is a sublattice of $\calL_A$.
    \item (\cite{GMRV-nearby-instances}) $E = \bigcup_i E_i$ defines $\calL'$.
    \item (\cite{GMRV-nearby-instances}) $E$, and hence the partial order $\Pi'$ generating $L'$, can be computed in polynomial time.
    \item (\cite{GusfieldI}) Matchings in $\calL'$ can be enumerated efficiently.
\end{enumerate}
\end{lemma}

% \begin{theorem}
% \label{thm:1_0_multiple}
% \begin{enumerate}
    % \item (\cite{GMRV-nearby-instances} Corollary 7) $\calL'$ is a sublattice of $\calL_A$.
    % \item (\cite{GMRV-nearby-instances} Lemma 33) $E = \cup_i E_i$ defines $\calL'$.
    % \item (\cite{GMRV-nearby-instances} Theorem 2) $E$, and hence the partial order $\Pi'$ generating $L'$, can be computed efficiently.
    % \item (\cite{GusfieldI} Theorem 3.5.1) Matchings in $\calL'$ can be enumerated efficiently.
% \end{enumerate}
% \end{theorem}

% The final result above is trivial. We can use the algorithm in Figure 3.8 of \cite{GusfieldI}. The partial order $\Pi'$ defines meta-rotations which can be swapped with the rotations of $\Pi(\Tilde{G}(\calM))$ in the book).

% !!Talk about considering only the case with unit capacity firms, so the notion of firms and workers can be flipped with no change in the theory. For this reason, we only reason the case $p\geq q$.

\subsection{Linear programming formulation}
\label{sec:lp_intro}

\begin{figure}[htp]
    \centering
    \includegraphics[width=8cm]{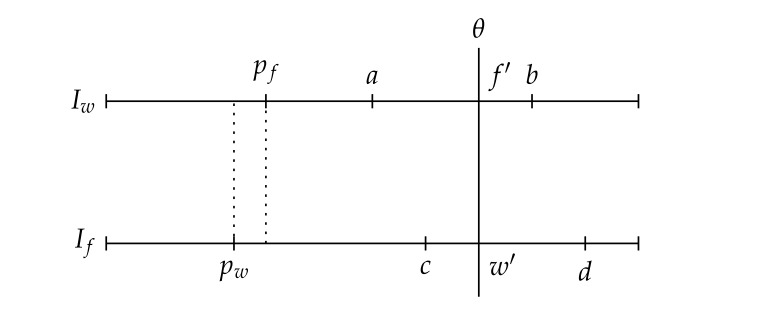}
    \caption{Finding integral stable matching from a fractional solution (Image from~\cite{Book-Online})}
    \label{fig:1}
\end{figure} 

A well-known result about the stable matching problem is that it admits a linear programming formulation in which all optimal vertices are integral(\cite{Teo-Sethuraman-Lp,VANDEVATE}). This yields an efficient method to compute stable matchings. The LP formulation is given below as~\eqref{eq:lp_single}. 

The third constraint disallows blocking pairs: for any worker-firm pair $(w, f)$, it ensures that if $w$ is matched to someone less preferred than $f$, then $f$ must be matched to someone more preferred than $w$. The remaining constraints ensure that $x$ defines a fractional perfect matching.

\begin{equation}
\label{eq:lp_single}
    \begin{aligned}[b]
        \text{maximize} \quad & 0 \\
        \text{subject to} \quad 
        & \sum_w x_{wf} = 1 \quad && \forall f \in \calF, \\
        & \sum_f x_{wf} = 1 \quad && \forall w \in \calW, \\
        & \sum_{f >^A_w f'} x_{wf'} - \sum_{w' >^A_f w} x_{w'f} \leq 0 \quad && \forall w \in \calW,\ \forall f \in \calF, \\
        & x_{wf} \geq 0 \quad && \forall w \in \calW,\ \forall f \in \calF.    
    \end{aligned}
\end{equation}

% \begin{equation}
% \label{eq:lp_single}
%     \begin{aligned}[b]
%         \text{max \quad 0} \\
%         \text{s.t.}\qquad \sum_w x_{wf} = 1 \ &\ \forall w \in W, \\
%          \sum_f x_{wf} = 1 \ &\ \forall f \in F, \\
%          \sum_{f >^A_w f'} x_{wf'} - \sum_{w' >^A_f w} x_{w'f} \leq 0 \ &\  \forall w \in W,\ \forall f \in F, \\
%          x_{wf} \geq 0, \ &\ \forall w \in W,\ \forall f \in F.    
%     \end{aligned}
% \end{equation}

A solution $x$ to LP~\ref{eq:lp_single} can be converted into an integral perfect matching as follows. Construct $2n$ unit intervals, each corresponding to an agent. For each worker $w$, divide its interval $I_w$ into $n$ subintervals of lengths $x_{wf}$ (note that $\sum_f x_{wf} = 1$), ordered from $w$'s most preferred firm to their least. Apply the same process to each firm $f$'s interval $I_f$, but order the subintervals from $f$'s least to most preferred worker.

Choose a value $\theta$ uniformly at random from $[0,1]$, and identify the subinterval in each agent’s interval that contains $\theta$. The probability that $\theta$ lies exactly on a boundary is zero, so we may ignore this case.

Let $\mu_{\theta} : \calW \rightarrow \calF$ denote the firm corresponding to the subinterval containing $\theta$ in each worker's interval, and let $\mu'_{\theta} : \calF \rightarrow \calW$ denote the analogous mapping for firms. These functions define perfect matchings and are inverses of each other. Moreover, they are stable for any $\theta \in [0,1]$.

\begin{lemma}[\cite{Teo-Sethuraman-Lp}]
\label{lem:lp_perfect} 
If $\mu_{\theta}(w) = f$ then $\mu'_{\theta}(f) = w$.
\end{lemma}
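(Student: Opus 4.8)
The plan is to pin down the two sub-intervals that $\theta$ must fall into and show they are \emph{identical}. Write $P_w(f) = \sum_{f' >_w f} x_{wf'}$ for the total mass of firms that $w$ ranks above $f$, and $Q_f(w) = \sum_{w' : w >_f w'} x_{w'f}$ for the total mass of workers that $f$ ranks below $w$. Since $w$'s interval lists firms from most to least preferred, the block assigned to $f$ is exactly $[P_w(f),\, P_w(f)+x_{wf})$, so $\mu_\theta(w)=f$ iff $\theta$ lies in this block; since $f$'s interval lists workers from least to most preferred, the block assigned to $w$ is $[Q_f(w),\, Q_f(w)+x_{wf})$, so $\mu'_\theta(f)=w$ iff $\theta$ lies there. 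Both blocks have the same length $x_{wf}$, so the whole lemma reduces to the single identity $P_w(f) = Q_f(w)$.

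Next I would reduce this identity to \emph{tightness} of the stability constraint. Writing $L_w(f)=\sum_{f':f>_w f'}x_{wf'}$ and $R_f(w)=\sum_{w'>_f w}x_{w'f}$, the matching (equality) constraints of LP~\ref{eq:lp_single} give $P_w(f)=1-x_{wf}-L_w(f)$ and $Q_f(w)=1-x_{wf}-R_f(w)$, whence $P_w(f)-Q_f(w)=R_f(w)-L_w(f)$. Feasibility of the third constraint is exactly $L_w(f)\le R_f(w)$, so this difference is $\ge 0$; thus $P_w(f)=Q_f(w)$ is equivalent to the constraint for $(w,f)$ being tight. Note that $\mu_\theta(w)=f$ forces $\theta$ into a block of positive length, so $x_{wf}>0$, and tightness need only be established under this hypothesis.

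The heart of the argument — and the step I expect to be the main obstacle — is proving that $x_{wf}>0$ implies tightness. Since the polytope of LP~\ref{eq:lp_single} is integral, I would write $x=\sum_k \lambda_k M_k$ as a convex combination of stable matchings of $A$, so that $L_w(f)=\sum_k \lambda_k \mathbf{1}[M_k(w)<_w f]$ and $R_f(w)=\sum_k \lambda_k \mathbf{1}[M_k(f)>_f w]$; it then suffices to show the two index sets coincide. The inclusion $\{M_k(w)<_w f\}\subseteq\{M_k(f)>_f w\}$ is immediate: if $w$ prefers $f$ to $M_k(w)$, then, as $M_k$ has no blocking pair, $f$ must prefer $M_k(f)$ to $w$. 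For the reverse inclusion I would invoke the meet operation. Because $x_{wf}>0$, some stable matching $M^*$ has $M^*(w)=f$. Suppose some $M_k$ had $M_k(f)>_f w$ yet $M_k(w)\ge_w f$; the case $M_k(w)=f$ is impossible (it forces $M_k(f)=w$), so $M_k(w)>_w f$. Consider $M^*\wedge M_k$: on the worker side $w$ keeps its better partner $M_k(w)\neq f$, while on the dual firm side $f$ keeps its worse partner $w$, i.e.\ $f$ is matched to $w$ — a contradiction. Hence the reverse inclusion holds, the two sums are equal, the constraint is tight, $P_w(f)=Q_f(w)$, and the two blocks coincide. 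This yields the equivalence $\mu_\theta(w)=f \Leftrightarrow \mu'_\theta(f)=w$, which in particular gives the stated direction.
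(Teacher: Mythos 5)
Your proof is correct, but note that the paper itself gives no proof of this lemma: it is imported verbatim from \cite{Teo-Sethuraman-Lp}, so the comparison must be with that standard argument rather than with anything in the text. Your reduction of the lemma to the identity $P_w(f)=Q_f(w)$, and of that identity to tightness of the stability constraint at every pair with $x_{wf}>0$, is exactly the right skeleton, and your verification of tightness is sound: the forward inclusion is just the no-blocking-pair condition for each $M_k$, and the reverse inclusion is the ``opposite interests'' property of the stable matching lattice, which you correctly extract from the worker-side/firm-side duality of the meet operation applied to $M^*\wedge M_k$ (the case $M_k(w)=f$ is indeed excluded because $M_k$ is a perfect matching). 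Where you genuinely diverge is in invoking integrality of the polytope (Theorem~\ref{thm:lp_polytope}) to decompose $x$ into stable matchings. The original Teo--Sethuraman proof deliberately avoids this: it derives the lemma directly from feasibility of the inequalities (essentially the two one-sided bounds $\theta\geq Q_f(w)$ and the symmetric statement for $\mu'_\theta$, combined with a counting argument over all $n$ workers and $n$ firms), precisely because the lemma is then used to \emph{reprove} integrality. Your route is not circular within this paper, since integrality of LP~\ref{eq:lp_single} is independently supplied by \cite{VANDEVATE}, and the paper only needs the rounding lemmas to establish integrality of the new two-instance polytope LP~\ref{eq:lp_multi}, to which your argument applies unchanged (a feasible point of LP~\ref{eq:lp_multi} is feasible for each single-instance LP). What you lose relative to the original argument is self-containedness — your proof could not replace Vande Vate's theorem — and what you gain is a shorter, more structural derivation that makes the role of lattice duality explicit.
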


\begin{lemma}[\cite{Teo-Sethuraman-Lp}]
\label{lem:lp_stable} 
For each $\theta \in [0,1]$, the matching $\mu_{\theta}$ is stable.
\end{lemma}

\begin{theorem}[\cite{VANDEVATE}]
\label{thm:lp_polytope}
The stable matching polytope defined by LP~\ref{eq:lp_single} has integral optimal vertices; that is, it is the convex hull of stable matchings.
\end{theorem}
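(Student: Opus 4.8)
The plan is to prove the two set inclusions that together give $P = \conv(\Mc_A)$, where $P$ denotes the polytope defined by LP~\ref{eq:lp_single} and $\Mc_A$ is the set of stable matchings of instance $A$; integrality of the optimal vertices then follows immediately, since the vertices of the convex hull of a finite set of points always form a subset of that set, so a linear objective attains its optimum at one of them.

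First I would dispatch the easy inclusion $\conv(\Mc_A) \subseteq P$. Because $P$ is a polytope and hence convex, it suffices to check that the $0/1$ incidence vector of every stable matching $M$ is feasible. The two equality constraints merely encode that $M$ is a perfect matching, so the only real content is the third (blocking-pair) constraint. For an integral matching the term $\sum_{f >^A_w f'} x_{wf'}$ is the indicator that $w$'s partner is strictly worse than $f$, while $\sum_{w' >^A_f w} x_{w'f}$ is the indicator that $f$'s partner is strictly better than $w$; the inequality therefore asserts precisely that $(w,f)$ is not a blocking pair, which holds for every $M \in \Mc_A$ by definition of stability.

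The substance is the reverse inclusion $P \subseteq \conv(\Mc_A)$, which is where the rounding scheme described just before the theorem does the work. Fix $x \in P$ and run the interval construction, so that each worker's unit interval $I_w$ is cut into subintervals of lengths $x_{wf}$. As $\theta$ sweeps $[0,1]$, the finitely many breakpoints coming from all $2n$ intervals partition $[0,1]$ into finitely many subintervals on each of which the map $\mu_\theta$ is constant; call the resulting distinct matchings $M_1, \dots, M_r$ and let $\lambda_j$ be the total length of the $\theta$-set on which $\mu_\theta = M_j$, so that $\lambda_j \ge 0$ and $\sum_j \lambda_j = 1$. By Lemma~\ref{lem:lp_perfect} each $\mu_\theta$ is a genuine perfect matching and by Lemma~\ref{lem:lp_stable} it is stable, so every $M_j \in \Mc_A$. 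It remains to identify this convex combination with $x$: for fixed $w$ and $f$, the subinterval of $I_w$ labelled $f$ has length exactly $x_{wf}$, and for uniform $\theta$ this gives $\Pr[\mu_\theta(w)=f] = x_{wf}$; grouping the interval lengths by the matching selected yields $\sum_j \lambda_j (\one_{M_j})_{wf} = x_{wf}$ for every pair $(w,f)$. Hence $x = \sum_j \lambda_j \one_{M_j}$ is a convex combination of stable matchings, so $x \in \conv(\Mc_A)$.

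The only genuinely delicate points are exactly the two I would invoke through Lemma~\ref{lem:lp_perfect} and Lemma~\ref{lem:lp_stable}: that the firm-side and worker-side roundings agree (so that $\mu_\theta$ is single-valued and bijective) and that no blocking pair is created at any threshold $\theta$. Granting those, the main remaining obstacle is pure bookkeeping---verifying that the length of the subinterval of $I_w$ assigned to $f$ is $x_{wf}$, and that only finitely many matchings arise---so that $x$ is genuinely expressed as a finite convex combination rather than a continuous mixture. Combining the two inclusions gives $P = \conv(\Mc_A)$, and since a polytope equal to the convex hull of finitely many integral points has all its vertices among those points, every vertex of $P$ is the incidence vector of a stable matching, proving Theorem~\ref{thm:lp_polytope}.
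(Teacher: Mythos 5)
Your proposal is correct: the paper itself gives no proof of this theorem (it is stated with a citation to \cite{VANDEVATE}), but your argument assembles precisely the ingredients the paper lays out in Section~\ref{sec:lp_intro} --- the interval rounding together with Lemmas~\ref{lem:lp_perfect} and~\ref{lem:lp_stable} --- into the standard Teo--Sethuraman decomposition $x=\sum_j \lambda_j \one_{M_j}$, which is the intended route. Both inclusions and the final step (vertices of the convex hull of finitely many integral points are among those points) are sound, so there is nothing to add.
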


\section{Results}
\label{sec:results}

In this section, we present our results. We primarily consider the case of $(p, q)$ robust stable matchings between two instances $(A, B)$ defined on $\mathcal{W} = \{w_1, w_2, \ldots, w_n\}$ and $\mathcal{F} = \{f_1, f_2, \ldots, f_n\}$. We assume $0 \leq p \leq q \leq n$, since the results are symmetric in $p$ and $q$.

In \Cref{sec:lattice}, we investigate the lattice structure of robust stable matchings for various values of $p$ and $q$. For cases where the lattice structure is preserved, \Cref{sec:optimal_matchings} provides algorithms to compute worker-optimal and firm-optimal robust stable matchings. In \Cref{sec:polytope}, we use linear programming techniques to obtain robust stable matchings by characterizing when the corresponding polytope admits integral optimal vertices. Finally, in \Cref{sec:robust_general_algo}, we address the general setting and present an XP-time algorithm that decides the existence of a robust stable matching. We would like to note that, while the results focus primarily on two stable matching instances, they can all be extended to many instances(see \Cref{thm:n_1_multiple}). (Proofs of all theorems and lemmas marked ${}^\dagger$ appear in \Cref{app:theorem_proofs} and \Cref{app:algo_proofs}.)

% \todo{RG: Put dagger next to lemmas with proofs in appendix. Make all statements restatable so that the lemma statements also appear in the appendix.}

\subsection{Lattice structure}
\label{sec:lattice}

The primary question we address is whether the set of robust stable matchings for $(A, B)$, namely $\mathcal{M}_A \cap \mathcal{M}_B$, always forms a sublattice of both $\Lc_A$ and $\Lc_B$. While \cite{GMRV-nearby-instances} showed that this set is a sublattice in the $(0,1)$ case, their proof relies on the fact that $(\mathcal{M}_A \setminus \mathcal{M}_B)$ forms a semi-sublattice of $\Lc_A$. They also prove that as long as changes are restricted to one side—i.e., when $(A, B)$ is of type $(0,n)$ or $(n,0)$—the set $\mathcal{M}_A \cap \mathcal{M}_B$ remains a sublattice of both $\Lc_A$ and $\Lc_B$~\cite[Proposition 6]{GMRV-nearby-instances}.

A trial-and-error search for counterexamples when agents on both sides change preferences yielded surprising results. In most instances examined, the sublattice property still held. As expected, when only one side changes preferences, the join and meet operations in $\Lc_A$ and $\Lc_B$ coincide. However, when both sides change preferences, this alignment no longer holds. Surprisingly, even in such cases, the set $\mathcal{M}_A \cap \mathcal{M}_B$ often remained a sublattice under both $\Lc_A$ and $\Lc_B$. Examples illustrating this behavior are shown in \Cref{app:examples}.

However, as \Cref{thm:both_sides_not_sublattice} shows, the intersection $\mathcal{M}_A \cap \mathcal{M}_B$ is not always a sublattice. An explicit example is provided in \Cref{ex:twoSidesNotSublattice}. Thus, while the sublattice property holds for $(0,n)$, it no longer holds for $(2,2)$.

\begin{restatable}{theorem}{bothsidesnotsublattice}
${}^\dagger$
\label{thm:both_sides_not_sublattice}
When $(A, B)$ are of type $(p, q)$ with $p, q \geq 2$, the set $\mathcal{M}_A \cap \mathcal{M}_B$ is not always a sublattice of $\Lc_A$ and $\Lc_B$.
\end{restatable}

This raises the question: where does the sublattice property break down? We show that as long as no more than one agent changes preferences on one side, the sublattice property still holds. In particular, we establish that for $(1,n)$, the intersection $\mathcal{M}_A \cap \mathcal{M}_B$ remains a sublattice.

\begin{figure}[ht]
	\begin{wbox}
	\centering
	\begin{minipage}{.45\linewidth}
        \centering
		\begin{tabular}{p{0.5cm}|p{0.5cm}p{0.5cm}p{0.5cm}p{0.5cm}}
			1  & a & b & c & d \\
			2  & b & c & a & d \\
			3  & d & c & a & b \\
			4  & c & d & a & b
		\end{tabular}
		
		\hspace{1cm}
		
		Worker preferences in $A$ 
		
		\hspace{1cm}
		
	\end{minipage}%
	\begin{minipage}{.45\linewidth}
    \centering
	\begin{tabular}{p{0.5cm}|p{0.5cm}p{0.5cm}p{0.5cm}p{0.5cm}}
		a  & 2 & 4 & 1 & 3 \\
		b  & 1 & 2 & 3 & 4 \\
		c  & 3 & 4 & 2 & 1 \\
		d  & 4 & 3 & 2 & 1
	\end{tabular}
	
	\hspace{1cm}
	
	Firm preferences in $A$
	 
	\hspace{1cm}
\end{minipage}%

	\begin{minipage}{.45\linewidth}
        \centering
		\begin{tabular}{p{0.5cm}|p{0.5cm}p{0.5cm}p{0.5cm}p{0.5cm}}
		1  & a & b & c & d \\
		2  & b & c & a & d \\
		\cellcolor{red!25}3  & \cellcolor{red!25}c & \cellcolor{red!25}d & \cellcolor{red!25}a & \cellcolor{red!25}b \\
		\cellcolor{red!25}4  & \cellcolor{red!25}d & \cellcolor{red!25}a & \cellcolor{red!25}c & \cellcolor{red!25}b 
		\end{tabular}

		\hspace{1cm}

		Worker preferences in $B$
	\end{minipage} 
	\begin{minipage}{.45\linewidth}
        \centering
		\begin{tabular}{p{0.5cm}|p{0.5cm}p{0.5cm}p{0.5cm}p{0.5cm}}
		a  & 2 & 4 & 1 & 3 \\
		b  & 1 & 2 & 3 & 4 \\
		\cellcolor{red!25}c  & \cellcolor{red!25}4 & \cellcolor{red!25}2 & \cellcolor{red!25}3 & \cellcolor{red!25}1 \\
		\cellcolor{red!25}d  & \cellcolor{red!25}3 & \cellcolor{red!25}4 & \cellcolor{red!25}1 & \cellcolor{red!25}2 
		\end{tabular}

		\hspace{1cm}

		Firm preferences in $B$
	\end{minipage} 
	\end{wbox}
        \caption{$\mathcal{M}_A \cap \mathcal{M}_B$ is not a sublattice of $\Lc_A$. Preference lists are in decreasing order. The agents whose preferences change from $A$ to $B$ are marked in red. This notation is used for all examples.}
	\label{ex:twoSidesNotSublattice}
\end{figure}

Let $A$ and $B$ be a $(1,n)$ robust stable matching instance where only $w_1$ and all firms change their preferences from $A$ to $B$.

\begin{restatable}{theorem}{sublatticetwoside}
\label{thm:sublattice_two_side}
If $A$ and $B$ are of type $(1,n)$, then $\mathcal{M}_A \cap \mathcal{M}_B$ is a sublattice of both $\Lc_A$ and $\Lc_B$. 
\end{restatable}

\begin{proof}
Without loss of generality, assume $|\mathcal{M}_A \cap \mathcal{M}_B| > 1$, and let $M_1$ and $M_2$ be two distinct matchings in $\mathcal{M}_A \cap \mathcal{M}_B$. Let $\vee_A$ and $\vee_B$ denote the join operations under $A$ and $B$, respectively. Likewise, let $\wedge_A$ and $\wedge_B$ denote the meet operations under $A$ and $B$.

The join $M_1 \vee_A M_2$ is obtained by assigning each worker to their less preferred partner from $M_1$ and $M_2$, according to instance $A$. Since the preferences of workers $w_2, w_3, \ldots, w_n$ are identical in $A$ and $B$, their less preferred partners under $A$ and $B$ coincide. Thus, for all $w \neq w_1$, we have:
$$(M_1 \vee_A M_2)(w) = (M_1 \vee_B M_2)(w).$$

Since $M_1 \vee_A M_2$ and $M_1 \vee_B M_2$ are both stable matchings under their respective instances, they are also perfect matchings. Hence, the remaining worker $w_1$ must be matched to the same firm in both matchings:
$$(M_1 \vee_A M_2)(w_1) = (M_1 \vee_B M_2)(w_1).$$

Therefore, $M_1 \vee_A M_2 = M_1 \vee_B M_2$. A similar argument shows that $M_1 \wedge_A M_2 = M_1 \wedge_B M_2$. Hence, the join and meet operations under instances $A$ and $B$ are equivalent, and both $M_1 \vee_A M_2$ and $M_1 \wedge_A M_2$ belong to $\mathcal{M}_A \cap \mathcal{M}_B$. The theorem follows.
\end{proof}

Thus, when the instances are of type $(1,n)$, \Cref{thm:generalization} guarantees the existence of a poset that generates the corresponding sublattice. However, unlike the $(0,1)$ case, \Cref{thm:bouquet} does not apply to instances of type $(0,n)$.

\begin{restatable}{lemma}{lemnnotsemisublattice}
${}^\dagger$
\label{lem:0_n_not_semi_sublattice}
If $A$ and $B$ are of type $(0,n)$, then $\mathcal{M}_A \setminus \mathcal{M}_B$ is not always a semi-sublattice of $\Lc_A$.
\end{restatable}

\Cref{ex:oneSideNotSemiSubLattice} provides an example where $A$ and $B$ are $(0,n)$, but $\mathcal{M}_A \setminus \mathcal{M}_B$ is not a semi-sublattice of $\Lc_A$, leading to the above lemma.

\begin{figure}[ht]
	\begin{wbox}
	\begin{minipage}{.37\linewidth}
        \centering
		\begin{tabular}{l|lllll}
			1  & a & b & d & c & e \\
			2  & b & a & c & d & e\\
			3  & c & d & b & a & e\\
			4  & b & d & c & a & e\\
                5  & c & e & a & b & d
		\end{tabular}
		
		\hspace{1cm}
		
		Worker preferences in $A$ and $B$
		
		\hspace{1cm}
		
	\end{minipage}%
	\begin{minipage}{.33\linewidth}
        \centering
	\begin{tabular}{l|lllll}
		a  & 4 & 2 & 1 & 3 & 5 \\
		b  & 3 & 1 & 2 & 4 & 5\\
		c  & 2 & 4 & 3 & 1 & 5\\
		d  & 1 & 3 & 4 & 2 & 5 \\
            e  & 5 & 1 & 2 & 3 & 4
	\end{tabular}
	
	\hspace{1cm}
	
	Firm preferences in $A$
	 
	\hspace{1cm}
 
        \end{minipage}%
	\begin{minipage}{.33\linewidth}
        \centering
		\begin{tabular}{l|lllll}
		a  & 4 & 2 & 1 & 3 & 5 \\
		\cellcolor{red!25}b  & \cellcolor{red!25}1 & \cellcolor{red!25}4 & \cellcolor{red!25}2 & \cellcolor{red!25}3 & \cellcolor{red!25}5 \\
            \cellcolor{red!25}c  & \cellcolor{red!25}4 & \cellcolor{red!25}5 & \cellcolor{red!25}3 & \cellcolor{red!25}1 & \cellcolor{red!25}2 \\
            d  & 1 & 3 & 4 & 2 & 5 \\
            e  & 5 & 1 & 2 & 3 & 4 
		\end{tabular}

		\hspace{1cm}

		Firm preferences in $B$

            \hspace{1cm}
	\end{minipage} 
	\end{wbox}
	\caption{$\mathcal{M}_{A} \setminus \mathcal{M}_B$ is not a semi-sublattice of $\Lc_A$.}
	\label{ex:oneSideNotSemiSubLattice} 
\end{figure} 

This shows that \Cref{thm:bouquet} cannot be used directly to find Birkhoff’s partial order in the $(0,n)$ setting. To circumvent this, we define $n$ hybrid instances $B_1, \dots, B_n$ such that for each $i \in \{1, 2, \ldots, n\}$, the pair $(A, B_i)$ is of type $(0,1)$. Specifically, the preference profile of agent $x$ in instance $B_i$ is defined as:
\begin{equation*}
>_x^{B_i} = 
\left \{
    \begin{array}{ll}
        >_x^B, & \text{if } x = f_i \\
        >_x^A, & \text{if } x \neq f_i
    \end{array}
\right\}
\end{equation*}
That is, instance $B_i$ is identical to instance $A$, except that the preferences of firm $f_i$ are replaced by those in instance $B$.

Defined this way, \Cref{thm:breaking_instances_one_side} shows that the matchings stable under both $A$ and $B$ are precisely those stable under $A$ and all the hybrid instances $B_1, B_2, \ldots, B_n$.

\begin{restatable}{theorem}{breakinginstancesoneside}
${}^\dagger$
\label{thm:breaking_instances_one_side}
$\mathcal{M}_A \cap \mathcal{M}_B = \mathcal{M}_A \cap \mathcal{M}_{B_1} \cap \mathcal{M}_{B_2} \cap \ldots \cap \mathcal{M}_{B_n}$.
\end{restatable}

\begin{restatable}{theorem}{birkhoff}
% ${}^\dagger$
\label{thm:birkhoff}
If $A$ and $B$ are of type $(0,n)$, then Birkhoff’s partial order generating $\mathcal{M}_A \cap \mathcal{M}_B$ can be computed efficiently, and the matchings in this set can also be enumerated with polynomial delay.
\end{restatable}

\begin{proof}
Since each $B_i$ differs from instance $A$ at exactly one agent, we apply \Cref{lem:1_0_sublattice} to compute the edge sets $E_i$ defining $\mathcal{M}_A \cap \mathcal{M}_{B_i}$. By \Cref{lem:1_0_multiple}, the union $E = \bigcup_i E_i$ defines the sublattice $\mathcal{M}_A \cap \mathcal{M}_B$, and can be used to enumerate all matchings in this intersection efficiently.
\end{proof}

A similar technique can be attempted to the case where $(A, B)$ is of type $(1,n)$. Let $w_1 \in \calW$ and $f_1, \dots, f_n \in \calF$ be the worker and firms whose preferences differ between $A$ and $B$. For any subset $X \subseteq \calW \cup \calF$, let $B_X$ denote the instance in which each agent in $X$ has the same preferences as in $B$, and all other agents have the same preferences as in $A$. Then:

\begin{restatable}{theorem}{breakinginstancestwoside}
${}^\dagger$
\label{thm:breaking_instances_two_side}
$\mathcal{M}_A \cap \mathcal{M}_B = \mathcal{M}_A \cap \mathcal{M}_{B_{\{w_1, f_1\}}} \cap \mathcal{M}_{B_{\{w_1, f_2\}}} \cap \dots \cap \mathcal{M}_{B_{\{w_1, f_n\}}}$.
\end{restatable}

\begin{restatable}{corollary}{birkhoffonen}
${}^\dagger$
\label{cor:birkhoff_one_n}
Computing Birkhoff’s partial order for the $(1,n)$ case is in $\mathcal{P}$ if and only if computing it for the $(1,1)$ case is in $\mathcal{P}$.
\end{restatable}

Unlike the $(0,1)$ case, the set $\mathcal{M}_A \setminus \mathcal{M}_B$ is not always a semi-sublattice when $(A, B)$ is of type $(1,1)$; see the example in \Cref{ex:1-1notSemisublattice}. As a result, \Cref{thm:bouquet} cannot be applied, and characterizing the sublattice structure remains an open problem.

\begin{restatable}{lemma}{lnotsemisublattice}
${}^\dagger$
\label{lem:1_1_not_semi_sublattice}
If $A$ and $B$ are of type $(1,1)$, then $\mathcal{M}_A \setminus \mathcal{M}_B$ is not always a semi-sublattice of $\Lc_A$.
\end{restatable}

\begin{figure}[ht]
	\begin{wbox}
	\centering
	\begin{minipage}{.45\linewidth}
        \centering
		\begin{tabular}{p{0.5cm}|p{0.5cm}p{0.5cm}p{0.5cm}p{0.5cm}p{0.5cm}}
			1  & a & b & c & d & e \\
			2  & b & c & a & d & e \\
			3  & c & a & b & d & e \\
			4  & d & c & e & a & b \\
            5  & e & d & a & b & c
		\end{tabular}
		
		\hspace{1cm}
		
		Worker preferences in $A$ 
		
		\hspace{1cm}
		
	\end{minipage}%
	\begin{minipage}{.45\linewidth}
    \centering
	\begin{tabular}{p{0.5cm}|p{0.5cm}p{0.5cm}p{0.5cm}p{0.5cm}p{0.5cm}}
		a  & 2 & 3 & 1 & 4 & 5 \\
		b  & 3 & 1 & 2 & 4 & 5 \\
		c  & 1 & 2 & 3 & 4 & 5 \\
		d  & 5 & 3 & 4 & 1 & 2 \\
        e  & 4 & 5 & 1 & 2 & 3
	\end{tabular}
	
	\hspace{1cm}
	
	Firm preferences in $A$
	 
	\hspace{1cm}
\end{minipage}%

	\begin{minipage}{.45\linewidth}
        \centering
		\begin{tabular}{p{0.5cm}|p{0.5cm}p{0.5cm}p{0.5cm}p{0.5cm}p{0.5cm}}
		1  & a & b & c & d & e \\
		2  & b & c & a & d & e \\
		\cellcolor{red!25}3  & \cellcolor{red!25}c & \cellcolor{red!25}d & \cellcolor{red!25}a & \cellcolor{red!25}b & \cellcolor{red!25}e \\
		4  & d & c & e & a & b \\
        5  & e & d & a & b & c 
		\end{tabular}

		\hspace{1cm}

		Worker preferences in $B$
	\end{minipage} 
	\begin{minipage}{.45\linewidth}
        \centering
		\begin{tabular}{p{0.5cm}|p{0.5cm}p{0.5cm}p{0.5cm}p{0.5cm}p{0.5cm}}
		a  & 2 & 3 & 1 & 4 & 5 \\
		b  & 3 & 1 & 2 & 4 & 5 \\
		\cellcolor{red!25}c  & \cellcolor{red!25}1 & \cellcolor{red!25}4 & \cellcolor{red!25}3 & \cellcolor{red!25}2 & \cellcolor{red!25}5 \\
		d  & 5 & 3 & 4 & 1 & 2 \\
        e  & 4 & 5 & 1 & 2 & 3
		\end{tabular}

		\hspace{1cm}

		Firm preferences in $B$
	\end{minipage} 
	\end{wbox}
	\caption{ $\mathcal{M}_{A} \setminus \mathcal{M}_B$ is not a semi-sublattice of $\Lc_A$, even when $(A, B)$ is of type $(1,1)$.}
	\label{ex:1-1notSemisublattice} 
\end{figure}

% 1. GMRV showed that (0,1) is a sublattice and semisublattice and can be efficiently characterized.
% 2. GMRV showed that (0,n) is also a sublattice.

% We show that 
% 1. (0,n) is sublattice by not semisublattice. But it can still be efficiently characterized.
% 2. (1,n) is also sublattice. It can be efficiently characterized if and only if (1,1) can be. (1,1) is open.
% 3. (2,2) is not sublattice.

\subsection{Worker and firm optimal stable matchings}
\label{sec:optimal_matchings}

\Cref{thm:sublattice_two_side} states that if the instance $(A, B)$ is of type $(1,n)$, then the set of robust stable matchings forms a sublattice of both $\Lc_A$ and $\Lc_B$. This sublattice structure allows us to define worker-optimal and firm-optimal robust stable matchings. In this section, we provide algorithms to compute these matchings.

We note that for the $(0,n)$ case—i.e., when only one side changes preferences—some algorithms from the literature can be adapted to compute these optimal matchings. We discuss these adaptations in \Cref{app:one_side}. These, however, do not extend to the $(1,n)$ setting, for which we design new algorithms based on Gale and Shapley’s deferred acceptance algorithm~\cite{GaleS}.

The original Deferred Acceptance Algorithm (\Cref{alg:gs_algorithm}) proceeds in iterations. In each iteration, the following steps occur:  
$(a)$ The proposing side (e.g., workers) proposes to their most preferred firm that has not yet rejected them.  
$(b)$ Each firm tentatively accepts its most preferred proposal received in that round and rejects all others.  
$(c)$ Each worker eliminates the firms that rejected them from their preference list.  

The process continues until a perfect matching is formed, at which point the algorithm outputs it as a stable matching. The key idea is that whenever a rejection occurs, the corresponding worker-firm pair can never be part of any stable matching. We modify this algorithm to compute worker- and firm-optimal stable matchings in the intersection $\mathcal{M}_A \cap \mathcal{M}_B$ for the $(1,n)$ setting.

\begin{restatable}{theorem}{nalgoworks}
${}^\dagger$
\label{thm:n_1_algo_works}
Let $A$ and $B$ be two stable matching instances that are of type $(1,n)$. Then \Cref{alg:daalgorithm_worker_both_sides} and \Cref{alg:daalgorithm_firm_both_sides} find the worker- and firm-optimal robust stable matchings, respectively, or correctly report that no such matching exists.
\end{restatable}

\begin{algorithm}[ht]
	\begin{wbox}
		\textsc{WorkerOptimal}($A,B$): \\
		\textbf{Input:} Stable matching instances $A$ and $B$ on agents $W \cup F$.\\
		\textbf{Output:}  Perfect matching $M$ or $\boxtimes$ (when no robust stable matching exists).\\
        Assume there are two rooms, $\mathcal{R}_A$ and $\mathcal{R}_B$ corresponding to instances $A$ and $B$ respectively. Each worker has a list, initialized to all firms, that they look at while proposing.
		\begin{enumerate}
    		% \item $\forall w \in W, w$ maintains a set of all the firms that haven't rejected it yet; Initially $\mathcal{F}$;
			% \item $\forall f \in F, f$ maintains a set of all the workers that proposed to it so far; Initially $\emptyset$;
			\item Until there is no rejection in any room or some worker is rejected by all firms, do:
			\begin{enumerate}
				    \item $\forall I \in \{A,B\}, \forall w \in W: w$ proposes to their best-in-$I$ uncrossed firm in $\mathcal{R}_I$.
				    \item $\forall I \in \{A,B\}, \forall f \in F: f$ tentatively accepts their best-in-$I$ proposal in room $\mathcal{R}_I$ and rejects the rest.
				    \item $\forall w \in W:$ if $w$ is rejected by a firm $f$ in \textbf{\textit{any}} room, they cross $f$ off their list. 
			\end{enumerate}  
			\item 
                \begin{enumerate}
                    \item If some worker is rejected by all firms, output $\boxtimes$.
                    \item Else, the acceptances define perfect matchings in each room. If they are the same in all rooms, output the perfect matching ($M$). 
                \end{enumerate}
		\end{enumerate}
	\end{wbox}
	\caption{Algorithm to find the worker-optimal stable matching. Note that proposers (workers) maintain a single list (or set) across rooms. They may propose to different firms in each room but update their list synchronously based on rejections from any room.}
	\label{alg:daalgorithm_worker_both_sides} 
\end{algorithm} 

\begin{algorithm} [ht]
	\begin{wbox}
		\textsc{FirmOptimal}($A,B$): \\
		\textbf{Input:} Stable matching instances $A$ and $B$ on agents $W \cup F$.\\
		\textbf{Output:}  Perfect matching $M$ or $\boxtimes$ (when no robust stable matching exists). \\
        Assume there are two rooms, $\mathcal{R}_A$ and $\mathcal{R}_B$ corresponding to instances $A$ and $B$ respectively. Each firm has a list, initialized to all workers, that they look at while proposing.
		\begin{enumerate}
			\item Until there is no rejection in any room or some firm is rejected by all firms, do:
			\begin{enumerate}
				    \item $\forall I \in \{A,B\}, \forall f \in F: f$ proposes to their best-in-$I$ uncrossed worker in $\mathcal{R}_I$.
				    \item $\forall I \in \{A,B\}, \forall w \in W: w$ tentatively accepts their best-in-$I$ proposal (call it $f^I_w$) in room $\mathcal{R}_I$ and rejects the rest.
                        \item $\forall I \in \{A,B\}, \forall w \in W: w$
                        sends \textbf{\textit{preemptive rejections}} to all firms $f'$ not yet rejected, that are worse than their current option in $\mathcal{R}_I$, i.e., $f^I_w \geq^I_w f'$.
				    \item $\forall f \in F:$ if $f$ is rejected by a worker $w$ in some room, they cross $w$ off their list. 
			\end{enumerate}  
			\item 
                \begin{enumerate}
                    \item If some firm is rejected by all workers, output $\boxtimes$.
                    \item Else, the acceptances define perfect matchings in each room. If they are the same in all rooms, output the perfect matching($M$). 
                \end{enumerate}
		\end{enumerate}

	\end{wbox}
        \caption{Algorithm to find the firm-optimal stable matching. Note that in step 1(c), workers may send rejections to firms that have not yet proposed to them.}
	\label{alg:daalgorithm_firm_both_sides} 
\end{algorithm} 

The worker-optimal algorithm~\ref{alg:daalgorithm_worker_both_sides} runs the DA algorithm simultaneously in multiple rooms, where the preference lists for each room correspond to individual instances. Agents act according to preferences in each room: in every iteration, workers propose and firms tentatively accept their best proposals and reject the rest. However, the rejections apply \emph{across all rooms}. For instance, if a worker is rejected by a firm in room $\mathcal{R}_A$, they remove that firm from their list in \emph{all} rooms. This synchronization ensures that if a perfect matching is returned, it is stable with respect to all instances. The primary technical challenge is the possibility that the algorithm produces distinct perfect matchings in different rooms. \Cref{lem:nomultmatchings} shows this cannot occur, a fact that follows from the intersection $\mathcal{M}_A \cap \mathcal{M}_B$ having the same partial order in both lattices.

The firm-optimal algorithm~\ref{alg:daalgorithm_firm_both_sides} proceeds analogously, with firms proposing and workers tentatively accepting. A key distinction is that workers issue \emph{preemptive} rejections: in each room, a worker rejects \emph{all} firms ranked below their current tentative match—including firms that have not yet proposed—based on their local preference list. This ensures the resulting perfect matching is identical in all rooms.

% The results from the previous section show that there exist worker and firm optimal robust stable matchings for the cases of (n,1) and (1,n). 

% We first provide algorithms for both these matchings.

% The problem of (2,2) is open.

\subsection{Integrality of the Polytope}
\label{sec:polytope}

\Cref{sec:lp_intro} presented an efficient algorithm for finding stable matchings using linear programming. We now extend this approach to define a linear program for robust stable matchings. Our goal is to characterize when the associated polytope is integral, allowing us to efficiently compute robust stable matchings.

We show that the robust stable matching polytope is integral when $(A, B)$ is of type either $(n,1)$ or $(1,n)$, which in turn yields efficient algorithms for computing robust matchings in these settings.

The linear program for the robust stable matching instance $(A, B)$ is given below. The first two and the last constraints ensure that any feasible solution $x$ is a fractional perfect matching, while the middle two constraints prevent blocking pairs under instances $A$ and $B$, respectively.

\begin{equation}
\label{eq:lp_multi}
    \begin{aligned}[b]
        \text{maximize} \quad & 0 \\
        \text{subject to} \quad
        & \sum_w x_{wf} = 1 && \forall f \in F, \\
        & \sum_f x_{wf} = 1 && \forall w \in W, \\
        & \sum_{f >^A_w f'} x_{wf'} - \sum_{w' >^A_f w} x_{w'f} \leq 0 && \forall w \in W,\ \forall f \in F, \\
        & \sum_{f >^B_w f'} x_{wf'} - \sum_{w' >^B_f w} x_{w'f} \leq 0 && \forall w \in W,\ \forall f \in F, \\
        & x_{wf} \geq 0 && \forall w \in W,\ \forall f \in F.
    \end{aligned}
\end{equation}

% \begin{equation}
% \label{eq:lp_multi}
%     \begin{aligned}[b]
%         \text{max 0} \\
%         \text{s.t.}\qquad \sum_w x_{wf} = 1 \ &\ \forall f \in F, \\
%          \sum_f x_{wf} = 1 \ &\ \forall w \in W, \\
%          \sum_{f >^A_w f'} x_{wf'} - \sum_{w' >^A_f w} x_{w'f} \leq 0 \ &\  \forall w \in W,\ \forall f \in F, \\
%          \sum_{f >^B_w f'} x_{wf'} - \sum_{w' >^B_f w} x_{w'f} \leq 0 \ &\  \forall w \in W,\ \forall f \in F, \\
%          x_{wf} \geq 0, \ &\ \forall w \in W,\ \forall f \in F.    
%     \end{aligned}
% \end{equation}

A solution to LP~\ref{eq:lp_multi} corresponds to a fractional matching that is stable under both $A$ and $B$. Analogous to the single-instance case (see \Cref{sec:lp_intro}), let $\theta \in [0,1]$ be chosen uniformly at random. Construct two integral perfect matchings $\mu^A_{\theta}$ and $\mu^B_{\theta}$ by performing the interval-based rounding procedure using the preference orders from instances $A$ and $B$, respectively.

\begin{restatable}{lemma}{lp}
${}^\dagger$
\label{lem:lp}
The matchings $\mu^A_{\theta}$ and $\mu^B_{\theta}$ are identical for every $\theta \in [0,1]$, i.e., $\mu^A_{\theta} = \mu^B_{\theta}$.
\end{restatable}

\begin{restatable}{theorem}{nintegralpolytope}
${}^\dagger$
\label{thm:n_1_integral_polytope} 
If $(A,B)$ is of type $(1,n)$ then $\mu^A_{\theta}$ is stable under both $A$ and $B$ and robust fractional stable matching polytope has integer optimal vertices.
\end{restatable}

Hence, in the $(n,1)$ and $(1,n)$ settings, LP~\ref{eq:lp_multi} provides an efficient method to compute a robust stable matching. However, this technique fails for $(2,2)$ instances—the proof of \Cref{lem:lp} does not extend, and, in fact, the polytope may not even be integral, see example in \Cref{ex:2_2_not_integral}.

\begin{restatable}{theorem}{notintegralpolytope}
% ${}^\dagger$
\label{thm:2_2_not_integral_polytope}
If $(A,B)$ is of type $(2,2)$ then robust stable matching polytope is not always integral.
\end{restatable}

\begin{proof}
Consider the example provided in \Cref{ex:2_2_not_integral}, where instances $A$ and $B$ are defined on 4 workers $\{1,2,3,4\}$ and 4 firms $\{a,b,c,d\}$. Only workers 1 and 2 and firms $a$ and $b$ change their preferences from $A$ to $B$.

Instance $A$ has exactly two stable matchings:
\[
M_A^1 = \{(1,a), (2,b), (3,c), (4,d)\}, \quad
M_A^2 = \{(1,b), (2,a), (3,d), (4,c)\}.
\]
Instance $B$ also has exactly two stable matchings:
\[
M_B^1 = \{(1,b), (2,a), (3,c), (4,d)\}, \quad
M_B^2 = \{(1,a), (2,b), (3,d), (4,c)\}.
\]
And so, there are no robust stable matchings for this pair of $A$ and $B$. 

Now, consider the fractional matching $M = \frac{1}{2}(M_A^1 + M_A^2) = \frac{1}{2}(M_B^1 + M_B^2)$. Since $M$ is a convex combination of stable matchings under both $A$ and $B$, it is a feasible solution to LP~\ref{eq:lp_multi}. Furthermore, $M$ is the unique solution to the LP in this instance.

Therefore, the robust stable matching polytope is a singleton fractional point, showing that the polytope is not integral in general for $(2,2)$ instances.
\end{proof}

\begin{figure}[ht]
	\begin{wbox}
	\centering
	\begin{minipage}{.45\linewidth}
        \centering
		\begin{tabular}{p{0.5cm}|p{0.5cm}p{0.5cm}p{0.5cm}p{0.5cm}p{0.5cm}}
			1  & a & c & b & d \\
			2  & b & a & c & d \\
			3  & c & d & a & b \\
			4  & d & c & a & b \\
		\end{tabular}
		
		\hspace{1cm}
		
		Worker preferences in $A$ 
		
		\hspace{1cm}
		
	\end{minipage}%
	\begin{minipage}{.45\linewidth}
        \centering
	\begin{tabular}{p{0.5cm}|p{0.5cm}p{0.5cm}p{0.5cm}p{0.5cm}p{0.5cm}}
		a  & 2 & 1 & 3 & 4 \\
		b  & 1 & 2 & 3 & 4 \\
		c  & 4 & 1 & 3 & 2 \\
		d  & 3 & 1 & 4 & 2 \\
	\end{tabular}
	
	\hspace{1cm}
	
	Firm preferences in $A$
	 
	\hspace{1cm}
        \end{minipage}%

        \begin{minipage}{.45\linewidth}
        \centering
		\begin{tabular}{p{0.5cm}|p{0.5cm}p{0.5cm}p{0.5cm}p{0.5cm}p{0.5cm}}
			\cellcolor{red!25}1  & \cellcolor{red!25}b & \cellcolor{red!25}d & \cellcolor{red!25}a & \cellcolor{red!25}c \\
			\cellcolor{red!25}2  & \cellcolor{red!25}a & \cellcolor{red!25}b & \cellcolor{red!25}c & \cellcolor{red!25}d \\
			3  & c & d & a & b \\
			4  & d & c & a & b \\
		\end{tabular}
		
		\hspace{1cm}
		
		Worker preferences in $B$ 
		
		\hspace{1cm}
		
	\end{minipage}%
	\begin{minipage}{.45\linewidth}
        \centering
	\begin{tabular}{p{0.5cm}|p{0.5cm}p{0.5cm}p{0.5cm}p{0.5cm}p{0.5cm}}
		\cellcolor{red!25}a  & \cellcolor{red!25}1 & \cellcolor{red!25}2 & \cellcolor{red!25}3 & \cellcolor{red!25}4 \\
		\cellcolor{red!25}b  & \cellcolor{red!25}2 & \cellcolor{red!25}1 & \cellcolor{red!25}3 & \cellcolor{red!25}4 \\
		c  & 4 & 1 & 3 & 2 \\
		d  & 3 & 1 & 4 & 2 \\
	\end{tabular}
	
	\hspace{1cm}
	
	Firm preferences in $B$
	 
	\hspace{1cm}
        \end{minipage}%

	\end{wbox}
	\caption{The robust stable matching polytope for this $(2,2)$ instance is not integral.}
	\label{ex:2_2_not_integral} 
\end{figure}

% \subsubsection{Extension to Multiple Instances}

All the results from this section extend to more than two instances. Let $(A_1, A_2, \ldots, A_k)$ be a robust stable matching instance that is $(1,n)$, i.e., $n-1$ of the $n$ workers have identical preference profiles across all instances. Then:

\begin{restatable}{theorem}{nmultiple}
${}^\dagger$
\label{thm:n_1_multiple}
For a $(1,n)$ robust stable matching instance $(A_1, A_2, \ldots, A_k)$ with $k \geq 2$, with the same worker changing preferences across instances, the set $\calL' = \mathcal{M}_{A_1} \cap \mathcal{M}_{A_2} \cap \ldots \cap \mathcal{M}_{A_k}$ forms a sublattice of each of the lattices $\Lc_{A_i}$. The LP formulation can be used to find the matchings in $\calL'$, and the worker-optimal and firm-optimal robust stable matchings can be computed in polynomial time. Furthermore, if the instances are also $(0,n)$, then Birkhoff's partial order generating $\Lc'$ can be computed efficiently, and its matchings can be enumerated with polynomial delay.
\end{restatable}

\subsection{A Robust Stable Matching Algorithm for the General Case}
\label{sec:robust_general_algo}

Sections~\ref{sec:lattice},~\ref{sec:optimal_matchings}, and~\ref{sec:polytope} establish that finding robust stable matchings for $(1,n)$ instances lies in $\mathcal{P}$, while \cite{NP-Two-stable} shows that the problem is NP-hard when $(A,B)$ is of type $(n,n)$. Consider the general setting when $A$ and $B$ differ arbitrarily on the preferences of agents $S \subseteq W \cup F$, where $|S \cap W| = p$ and $|S \cap F| = q$. \Cref{alg:robust_xp} find a robust stable matching if one exists. By \Cref{thm:robust_xp}, the algorithm runs in time $O(n^{p+q+2})$ and can be modified to enumerate all robust stable matchings with the same asymptotic delay, yielding an XP-time algorithm parameterized by the number of agents whose preferences differ across the two instances. 

Intuitively all possible partner assignments for agents in $S$ are enumerated, there are at most $O(n^{p+q})$ possibilities. Each one defines a partial matching $M$ for the agents in $S$, and the algorithm attempts to determine whether $M$ can be extended to a full matching that is stable under $A$ and $B$. First it verifies that $M$ does not contain any blocking pairs within $T = S \cup M(S)$. If no blocking pair exists, define truncated instance $X$ on the remaining agents $U = (W \cup F) \setminus T$. In $X$, the preference lists of agents in $U$ are shortened to eliminate any potential blocking pairs involving agents in $T$. A stable matching $M'$ is computed on $X$. If $M'$ is a perfect matching on $U$, it can be shown that the combined matching $M^\star = M \cup M'$ is a stable matching on the full set of agents in both $A$ and $B$.

\begin{algorithm}[ht]
  \begin{wbox}
    \textsc{RobustStableMatching}$(A,B)$:\\
    \textbf{Input:} Stable matching instances $A$ and $B$ on agents $W \cup F$.\\
    \textbf{Output:} Perfect matching $M^\star$ or $\boxtimes$ (when no robust stable matching exists).\\
    $S \subseteq W \cup F$ are agents whose preferences differ in $A$ and $B$.
    \begin{enumerate}
      \item \textbf{For} each assignment of partners to agents in $S$ that defines a valid partial matching:
      \begin{enumerate}
        \item Let $M$ be the resulting partial matching and $T \coloneqq S \cup M(S)$.
        \item \textbf{If} there exists $w \in T \cap W$ and $f \in T \cap F$ such that $(w,f)$ is a blocking pair with respect to $M$ in either $A$ or $B$, \textbf{continue} to the next iteration.
        \item Remove all agents in $T$ to obtain the instance $X$ on $U \coloneqq (W \cup F)\setminus T$;
        \item For each $a \in T$ and $b \in U$, if $b \succ_a M(a)$ in either $A$ or $B$, then truncate $b$'s preference list at $a$ in $X$ (i.e., remove $a$ and all agents ranked below $a$).
        \item Find stable matching $M'$ in $X$.
        \item {If} $M'$ is a perfect matching on $U$, \textbf{return} $M^\star \coloneqq M \cup M'$.
      \end{enumerate}
      \item \textbf{Return} $\boxtimes$.
    \end{enumerate}
  \end{wbox}
  \caption{An XP-time algorithm to find a robust stable matching for small $(p,q)$.}
  \label{alg:robust_xp}
\end{algorithm}

\begin{restatable}{theorem}{robustalgoworks}
${}^\dagger$
\label{thm:robust_xp}
    For instances $A$ and $B$ of type $(1,n)$. \Cref{alg:robust_xp} finds a robust stable matching in $O\!\left(n^{p+q+2}\right)$ time, or correctly reports that no such matching exists. All the set of robust stable matchings can be enumerated with $O\!\left(n^{p+q+2}\right)$ delay.
 \end{restatable}
\begin{restatable}{corollary}{robust_xp}
\label{cor:xp_time}
If a constant number of agents change their preferences, then deciding whether a robust stable matching exists and finding one can be achieved in polynomial time. All such matchings can be enumerated with polynomial delay.
\end{restatable}
% \begin{restatable}{corollary}{robust_xp}
% \label{cor:xp_time}
% If the robust stable matching instance is such that only a constant number of agents change their preferences, then deciding whether a robust stable matching exists and finding one can be achieved in polynomial time. All such matchings can be enumerated with polynomial delay.
% \end{restatable}
\Cref{thm:robust_xp} implies \Cref{cor:xp_time} as the running time is polynomial if a constant number of agents change preferences. The decision and enumeration problems for robust stable matchings can be solved efficiently. This shows that the set of robust stable matchings can be expressed as a union of $O(n^{p+q})$ stable matching lattices, one for each consistent assignment of partners to agents in $S$. 

However the partial orders defining these lattices can differ significantly. It remains unclear whether this decomposition yields an efficient method for computing Birkhoff’s partial order even in the $(1,1)$ case—and by \Cref{cor:birkhoff_one_n}, also in the $(1,n)$ case. These observations naturally lead to the following open problems.

% However it remains unclear whether this decomposition yields an efficient method for computing Birkhoff’s partial order even in the $(1,1)$ case—and by \Cref{cor:birkhoff_one_n}, also in the $(1,n)$ case. 

\begin{remark}
\textbf{Open Problem.} The problem of computing Birkhoff’s partial order for the $(1,1)$ case and by extension, for $(n,1)$ is open.
\end{remark}

\begin{remark}
\textbf{Open Problem.} Given a $(p,q)$ robust stable matching instance $(A,B)$ with $2 \leq p,q \leq n$, $p+q < 2n$, and $p+q = \omega(1)$, it remains open whether determining the existence of a robust stable matching—and computing one if it exists—is in $\mathcal{P}$.
\end{remark}

\section*{Acknowledgments}
We thank all the reviewers for their valuable comments. We are especially grateful to Simon Murray for his insights, which led to~\Cref{alg:robust_xp}.

\bibliographystyle{alpha}
\bibliography{refs}

\appendix

\section{The Partial Order of Rotations}
\label{app:rotations} 

For the lattice of stable matchings, the partial order $\Pi$ defined in Birkhoff's Theorem has additional useful structural properties. First, its elements are rotations. 
A {\em rotation} takes $r$ matched worker-firm pairs in a fixed order, say 
$\{w_0f_0, w_1f_1,\ldots, w_{r-1}f_{r-1}\}$, and cyclically changes the matches of these $2r$ 
agents. The number $r$, the $r$ pairs, and their order are chosen so that when a rotation is applied to a stable matching containing all $r$ pairs, the resulting matching is also stable; moreover, there is no valid rotation on any subset of these $r$ pairs, under
any ordering. Hence, a rotation can be viewed as a minimal
change to the current matching that results in a stable matching.

Any worker-firm pair $(w, f)$ belongs to at most one rotation. Consequently, the set $R$ of rotations underlying $\Pi$ satisfies $|R| = O(n^2)$, and hence, $\Pi$ is a succinct representation of $\Lc$ whereas the latter can be exponentially large. $\Pi$ will be called the {\em rotation poset} for $\Lc$.

Second, the rotation poset helps traverse the lattice as follows.
For any closed set $S$ of $\Pi$, the corresponding stable matching $M(S)$ can be obtained
as follows: start from the worker-optimal matching in the lattice and apply
the rotations in set $S$, in any topological order consistent with $\Pi$. The resulting
matching will be $M(S)$. In particular, applying all rotations in $R$, starting from the worker-optimal
matching, leads to the firm-optimal matching.

The following process yields a rotation for a stable matching $M$. For a worker $w$ let $s_M(w)$ denote the first firm $f$ on $w$'s list such that $f$ strictly prefers $w$ to its $M$-partner. Let $next_M(w)$ denote the partner in $M$ of firm $s_M(w)$. A \emph{rotation} $\rho$ \emph{exposed} in $M$ is an ordered list of pairs $\{w_0f_0, w_1f_1,\ldots, w_{r-1}f_{r-1}\}$ such that for each $i$, $0 \leq i \leq r-1$, $w_{i+1}$ is $next_M(w_i)$, where $i+1$ is taken modulo $r$. In this paper, we assume that the subscript is taken modulo $r$ whenever we mention a rotation. Notice that a rotation is cyclic and the sequence of pairs can be cyclically rotated. $M / \rho$ is defined to be a matching in which each worker not in a pair of $\rho$ stays matched to the same firm and each worker $w_i$ in $\rho$ is matched to $f_{i+1} = s_M(w_i)$. It can be proven that $M / \rho$ is also a stable matching. The transformation from $M$ to $M / \rho$ is called the \emph{elimination} of $\rho$ from $M$.

\begin{lemma}[\cite{GusfieldI}, Theorem 2.5.4]
	\label{lem:seqElimination}
	Every rotation appears exactly once in any sequence of elimination from $M_0$ to $M_z$.
\end{lemma}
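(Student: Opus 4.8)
The plan is to translate the dynamic process of eliminating exposed rotations into the static combinatorics of closed sets of the rotation poset $\Pi$, and then to observe that a maximal elimination sequence is nothing but a linear extension of $\Pi$. Recall from the structural facts above that every stable matching equals $M(S)$ for a unique closed set $S$ of $\Pi$, that $M(\emptyset)$ is the worker-optimal matching $M_0$, and that $M(R) = M_z$ is the firm-optimal matching, where $R$ is the full set of rotations underlying $\Pi$. Since each elimination step is a \emph{minimal} change producing another stable matching, the whole sequence lives inside this correspondence.

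The crux is a characterization of which rotations are exposed in a given matching. First I would prove: for a closed set $S$, a rotation $\rho$ is exposed in $M(S)$ if and only if $\rho \notin S$ and every predecessor of $\rho$ in $\Pi$ already lies in $S$ --- equivalently, $\rho$ is a minimal element of $R \setminus S$. The backward direction uses that once all predecessors of $\rho$ have been eliminated, the worker--firm pairs on which $\rho$ acts appear in $M(S)$ in exactly the cyclic configuration demanded by $next_M$, so that $\rho$ is exposed; the forward direction uses that the pairs moved by $\rho$ cannot be altered until $\rho$'s predecessors are eliminated, together with the fact that each worker--firm pair belongs to at most one rotation. I expect this equivalence to be the main obstacle, since it is precisely where the combinatorial definition of a rotation (via $s_M$ and $next_M$) must be reconciled with the order relations of $\Pi$. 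I would establish it by induction on $|S|$, eliminating a minimal element of $R \setminus S$ from $M(S)$ and invoking the inductive description of $M(S \cup \{\rho\})$.

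Granting this characterization, the remainder is short. Eliminating an exposed rotation $\rho$ from $M(S)$ produces $M(S \cup \{\rho\})$, and since all predecessors of $\rho$ lie in $S$, the set $S \cup \{\rho\}$ is again closed. Hence an elimination sequence starting at $M_0 = M(\emptyset)$ corresponds to a strictly increasing chain of closed sets $\emptyset = S_0 \subsetneq S_1 \subsetneq \cdots$, each step adding exactly one rotation not previously present. Strict monotonicity forces termination after at most $|R|$ steps, and by the characterization the sequence can no longer be extended exactly when $R \setminus S = \emptyset$, i.e. $S = R$. Since $M(R) = M_z$, every maximal elimination sequence from $M_0$ reaches $M_z$ in exactly $|R|$ steps, adding each rotation once and only once, which is precisely the claim.
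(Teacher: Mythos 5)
The paper does not prove this lemma at all; it is imported verbatim as Theorem 2.5.4 of Gusfield and Irving, so there is no internal proof to compare against. Judged on its own terms, your proposal has a genuine circularity problem. You take as given that every stable matching is $M(S)$ for a unique closed set $S$ of the rotation poset $\Pi$, that $M(S)$ is even well defined (i.e.\ independent of the order in which the rotations of $S$ are applied), and that $M(R)=M_z$. But in the development this paper follows (and in Gusfield--Irving), the precedence relation defining $\Pi$ is itself \emph{defined} by quantifying over elimination sequences (``$\rho'\prec\rho$ if $\rho'$ is eliminated in every sequence of eliminations from $M_0$ to a matching in which $\rho$ is exposed''), and the closed-set correspondence, the well-definedness of $M(S)$, and your key characterization ``$\rho$ is exposed in $M(S)$ iff $\rho$ is minimal in $R\setminus S$'' are all theorems proved \emph{downstream} of the statement you are asked to prove. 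Deriving the lemma from them is backwards. Moreover, the step you yourself flag as the crux --- the exposure characterization --- is only asserted with a sketch of intentions (``I would establish it by induction on $|S|$''), so even granting the framework the argument is not carried out.

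A non-circular proof goes directly through the elimination dynamics. First, eliminating any exposed rotation moves every affected worker to a strictly less preferred firm and leaves all other workers fixed, so along any elimination sequence each worker's partner weakly deteriorates; since a rotation $\rho$ can only be exposed in a matching containing all its pairs $w_if_i$, and after its elimination $w_i$ sits strictly below $f_i$ forever, each rotation is eliminated \emph{at most once}. Second, every stable matching other than $M_z$ exposes at least one rotation, so by the monotone decrease every maximal sequence from $M_0$ terminates, and terminates at $M_z$. Third, if two distinct rotations $\rho\neq\rho'$ are both exposed in $M$, they act on disjoint pairs and $\rho'$ remains exposed in $M/\rho$; this local confluence, combined with termination, forces all maximal sequences from $M_0$ to eliminate exactly the same set of rotations, which is then by definition all of $R$. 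Only after this is established can one define $\Pi$, prove that closed sets biject with stable matchings, and obtain the exposure characterization you wanted to start from.
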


Let $\rho = \{w_0f_0, w_1f_1,\ldots, w_{r-1}f_{r-1}\}$ be a rotation. For $0 \leq i \leq r-1$, we say that $\rho$ {moves $w_i$ from $f_i$ to $f_{i+1}$}, and {moves $f_i$ from $w_{i}$ to $w_{i-1}$}. If $f$ is either $f_i$ or is strictly between $f_{i}$ and $f_{i+1}$ in $w_i$'s list, then we say that $\rho$ {moves $w_i$ below $f$}. Similarly, $\rho$ {moves $f_i$ above} $w$ if $w$ is $w_i$ or between $w_i$ and $w_{i-1}$ in $f_i$'s list.

\subsection{The rotation poset}
\iffalse
If a stable matching includes the pair $bg$, then we call $M$ a \emph{$bg$-matching}. The unique $bg$-matching that dominates all $bg$-matchings is said to be \emph{irreducible}. The set of all irreducible matchings forms a partial order from the dominance relationship.
The \emph{rotation poset} is based on an isomorphism with the irreducible matching partial order. 
\fi

A rotation $\rho'$ is said to \emph{precede} another rotation $\rho$, denoted by $\rho' \prec \rho$, if $\rho'$ is eliminated in every sequence of eliminations from $M_0$ to a stable matching in which $\rho$ is exposed. 
If $\rho'$ precedes $\rho$, we also say that $\rho$ \emph{succeeds} $\rho'$.
If neither $\rho' \prec \rho$ nor $\rho' \succ \rho$, we say that $\rho'$ and $\rho$ are 
\emph{incomparable}.
Thus, the set of rotations forms a partial order via
this precedence relationship. The partial order on rotations is called \emph{rotation poset} and denoted by $\Pi$.

\begin{lemma}[\cite{GusfieldI}, Lemma 3.2.1]
	\label{lem:pre2}
	For any worker $w$ and firm $f$, there is at most one rotation that moves $w$ to $f$, $w$ below $f$, or $f$ above $w$. Moreover, if $\rho_1$ moves $w$ to $f$ and $\rho_2$ moves $w$ from $f$ then $\rho_1 \prec \rho_2$.
\end{lemma}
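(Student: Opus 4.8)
The plan is to fix a single sequence of eliminations from the worker-optimal matching $M_0$ to the firm-optimal matching $M_z$ and to track how the partner of a fixed agent evolves along it. By Lemma~\ref{lem:seqElimination} every rotation occurs exactly once in this sequence, and the basic monotonicity property of rotations is that each eliminated rotation moves every worker it touches strictly down its preference list and every firm it touches strictly up (this follows from stability, since $s_M(w)$ always lies strictly below $M(w)$ on $w$'s list). Writing $N_0 = M_0, N_1, \ldots, N_z = M_z$ for the successive matchings, the partner $N_j(w)$ is therefore weakly monotone in $j$ and strictly worsening at each step where it changes, while $N_j(f)$ strictly improves whenever it changes. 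This irreversibility is the engine behind both claims.

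For the first (uniqueness) claim, note that whether a rotation moves $w$ to $f$, $w$ below $f$, or $f$ above $w$ is intrinsic to the rotation, so each of the three ``at most one'' assertions is about the fixed set of all rotations. I would prove all three the same way. A rotation $\rho$ moves $w$ to $f$ exactly when, at its (unique) elimination step $j$, one has $N_{j+1}(w) = f \neq N_j(w)$; since $N_0(w), N_1(w), \ldots$ is strictly decreasing in $w$'s order, $w$ is matched to $f$ on at most one contiguous stretch, so at most one step, and hence at most one rotation, can bring $w$ to $f$. Likewise $\rho$ moves $w$ below $f$ exactly when $N_j(w) \ge_w f >_w N_{j+1}(w)$, and strict monotonicity means the threshold $f$ is crossed at most once. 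The ``moves $f$ above $w$'' case is identical with workers and firms interchanged, using that $f$'s partner strictly improves.

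For the precedence claim, suppose $\rho_1$ moves $w$ to $f$ and $\rho_2$ moves $w$ from $f$. In the fixed sequence these occur at steps $j_1 < j_2$, since $\rho_1$ is the only step entering the stretch on which $w$ is matched to $f$ and $\rho_2$ the only step leaving it. To upgrade this to $\rho_1 \prec \rho_2$ in the poset I must establish it in \emph{every} elimination sequence from $M_0$: for $\rho_2$ to be exposed, the pair $wf$ must belong to the current matching, i.e.\ $w$ must be matched to $f$. Starting from $M_0$, where $w$ is matched to $M_0(w) >_w f$ (as $\rho_1$ lowers $w$ onto $f$ from a strictly better firm, $w$ is not at $f$ initially), the partner of $w$ can reach $f$ only through a rotation that moves $w$ to $f$, and by the first claim $\rho_1$ is the only such rotation. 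Hence $\rho_1$ must already have been eliminated before $\rho_2$ can be exposed, which is exactly $\rho_1 \prec \rho_2$.

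The step I expect to be the main obstacle is this last upgrade from ``ordered in one sequence'' to ``ordered in the poset,'' since precedence quantifies over all elimination sequences while the convenient monotonicity argument lives in a single fixed one. The crux is to combine irreversibility of partner changes with the uniqueness from the first claim: because $w$'s partner only ever worsens and $\rho_1$ is the one and only rotation that can place $w$ at $f$, no sequence can expose $\rho_2$ without having first eliminated $\rho_1$. Care must also be taken to read the first claim as three separate uniqueness assertions rather than a single rotation realizing all three, since ``moves $w$ to $f$'' (new partner equal to $f$) and ``moves $w$ below $f$'' (partner crossing past $f$) are genuinely different relations and in general are witnessed by different rotations.
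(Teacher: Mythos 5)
The paper does not prove this statement at all: it is quoted verbatim from Gusfield and Irving (their Lemma 3.2.1) as background for the rotation poset, so there is no in-paper argument to compare against. Your reconstruction is nevertheless correct and is essentially the standard proof from the cited source. The two load-bearing points are exactly the ones you identify: (i) along any elimination sequence each worker's partner strictly worsens and each firm's strictly improves (which you correctly ground in stability via $s_M(w)$ lying strictly below $M(w)$, and which for firms follows from $f_{i+1}=s_{M}(w_i)$ preferring $w_i$ to $w_{i+1}$), so each of the three threshold events can occur at most once in a fixed sequence; combined with Lemma~\ref{lem:seqElimination}, which puts every rotation into that one sequence, this gives global uniqueness. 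And (ii) the upgrade from ``ordered in one sequence'' to $\rho_1\prec\rho_2$ is handled properly: exposure of $\rho_2$ forces the pair $(w,f)$ into the current matching, $M_0(w)\neq f$ since $w$'s partner only worsens, and uniqueness makes $\rho_1$ the only rotation that can place $w$ at $f$, so $\rho_1$ appears in \emph{every} sequence reaching a matching exposing $\rho_2$. Your closing caveat about reading the first claim as three separate uniqueness assertions is also the right reading of the statement. I see no gap.
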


\begin{lemma}[\cite{GusfieldI}, Lemma 3.3.2]
	\label{lem:computePoset}
	$\Pi$ contains at most $O(n^2)$ rotations and can be computed in polynomial time.
\end{lemma}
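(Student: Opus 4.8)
The plan is to separate the statement into its two assertions—the cardinality bound $|R| = O(n^2)$ and the polynomial-time computability of $\Pi$—and handle them in that order, since the bound is what will control the running time.

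For the counting bound, I would argue via the pairs that rotations create. Recall that a rotation $\rho = \{w_0f_0, \ldots, w_{r-1}f_{r-1}\}$ moves each $w_i$ from $f_i$ to $f_{i+1}$; in particular it moves $w_i$ to $f_{i+1}$, and the $r$ pairs $(w_0,f_1), (w_1,f_2), \ldots, (w_{r-1},f_0)$ are distinct since all $2r$ agents of a rotation are distinct. By the first part of Lemma~\ref{lem:pre2}, for each fixed pair $(w,f)$ there is at most one rotation of $\Pi$ that moves $w$ to $f$. Hence, assigning to each rotation the set of pairs it moves a worker to yields sets that are pairwise disjoint across rotations, so summing over all rotations the total number of such pairs is at most $n^2$. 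As every rotation contributes at least two pairs, we conclude $|R| \le n^2/2 = O(n^2)$.

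For computability, I would enumerate the rotations through the elimination process. Start from the worker-optimal matching $M_0$, computed in polynomial time by the Gale-Shapley algorithm of Figure~\ref{alg:gs_algorithm}. Given the current stable matching $M$, compute $s_M(w)$ and $next_M(w)$ for every worker; these define a partial function on $\mathcal{W}$ whose functional digraph consists of paths feeding into directed cycles, and by definition every cycle is exactly a rotation exposed in $M$. Pick an exposed rotation $\rho$, record it, and replace $M$ by the stable matching $M/\rho$; repeat until no rotation is exposed, i.e.\ until the firm-optimal matching $M_z$ is reached. By Lemma~\ref{lem:seqElimination} every rotation of $\Pi$ appears exactly once along this sequence, so the process enumerates $R$ without repetition and terminates after $|R| = O(n^2)$ iterations; since each iteration (computing $s_M$ and $next_M$, extracting a cycle, and eliminating $\rho$) costs $\poly(n)$, the whole enumeration is polynomial.

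It remains to recover the precedence relation $\prec$ on the enumerated rotations and output $H(\Pi)$. The covering edges come in two families, both readable from local ``moves to / from / below / above'' data of the rotations: the type-1 edges are given directly by the second part of Lemma~\ref{lem:pre2}, adding $\rho_1 \prec \rho_2$ whenever $\rho_1$ moves some $w$ to $f$ and $\rho_2$ moves the same $w$ from $f$; the type-2 edges record that a rotation which moves a worker below a firm $f$ must be preceded by the unique rotation that moves $f$ above that worker. Each of the $O(n^2)$ rotations touches $O(n)$ pairs, so all candidate edges can be listed in polynomial time and then transitively reduced. The main obstacle I expect is precisely this last step: proving that the type-1 and type-2 rules generate \emph{exactly} the order $\prec$—which is defined globally, through every elimination sequence from $M_0$—rather than merely a sub- or super-relation. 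Establishing this equivalence requires analyzing how eliminating one rotation exposes the next, and this structural argument, rather than the routine counting and enumeration above, is where the real content lies.
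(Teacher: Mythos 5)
First, note that the paper does not prove this lemma at all: it is imported verbatim from \cite{GusfieldI} (Lemma 3.3.2), so there is no in-paper argument to compare against. Judged on its own, your proposal gets the two easy thirds right and leaves the hard third open. The cardinality bound is correct and is the standard argument: the ``moves $w$ to $f$'' pairs of distinct rotations are disjoint by the first part of Lemma~\ref{lem:pre2}, each rotation owns at least two such pairs, hence $|R|\le n^2/2$. The enumeration of $R$ by repeated elimination from $M_0$ is also fine, modulo one small unstated fact you are relying on: that every stable matching other than the firm-optimal one has at least one exposed rotation (otherwise the loop could stall before reaching $M_z$, and Lemma~\ref{lem:seqElimination} alone does not rule this out). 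That is a standard fact from \cite{GusfieldI} and easy to supply, so I would call it an omission rather than an error.

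The genuine gap is the one you name yourself: recovering $\prec$ from the type-1 and type-2 edge rules. This is not a finishing touch --- it \emph{is} the content of the cited lemma. The relation $\prec$ is defined by a universal quantification over all elimination sequences reaching a matching in which $\rho$ is exposed, and both directions of the claimed equivalence need proof. Soundness of the type-1 edges follows from the second part of Lemma~\ref{lem:pre2}, but soundness of the type-2 edges (that moving $w_i$ below $f$ forces the prior elimination of the rotation moving $f$ above that worker) already requires an argument about which pairs can be stable after an elimination; and completeness --- that \emph{every} forced precedence is generated by chains of these two local rules --- requires analyzing exactly which rotations become exposed when one is eliminated. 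Without that, your algorithm could output a transitive closure that is a strict sub- or super-relation of $\Pi$, and nothing in your write-up excludes either possibility. As it stands the proposal proves the $O(n^2)$ bound and polynomial-time enumeration of the \emph{elements} of $\Pi$, but not polynomial-time computation of the \emph{poset}, which is what the lemma asserts.
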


Consequently, $\Pi$ is a succinct representation of $\Lc$; the latter can be exponentially large. Rotations also help in the efficient enumeration of stable matchings (\cite{GusfieldI}, Chapter 3.5). If we know the edge set defining a sublattice (see \Cref{sec.birkhoff} for the definition) $\calL'$ of a stable matching lattice, the corresponding compression defines a partial order on the ``meta-rotations''(groups of rotations). This new partial order can be used to efficiently enumerate the matchings in sublattice $\calL'$.

% \newpage

\section{Omitted Proofs}
\label{app:theorem_proofs}
\lemnnotsemisublattice*

\begin{proof}[Proof of \Cref{lem:0_n_not_semi_sublattice}]
Consider instances $A$ and $B$ as given in \Cref{ex:oneSideNotSemiSubLattice}. The set of workers is $\mathcal{W} = \{1, 2, 3, 4, 5\}$ and the set of firms is $\mathcal{F} = \{a, b, c, d, e\}$. Instance $B$ is obtained from instance $A$ by permuting the lists of firms $b$ and $c$. Consider matchings $M_1 = \{1b, 2a, 3c, 4d, 5e\}$ and $M_2 = \{1a, 2b, 3d, 4c, 5e\}$ that are both stable matchings with respect to instance $A$. With respect to instance $B$, $M_1$ has the blocking pair $(5, c)$ and $M_2$ has the blocking pair $(4, b)$. Hence, $M_1$ and $M_2$ are both in $\mathcal{M}_{A} \setminus \mathcal{M}_B$. However, the \emph{join} with respect to $A$, $M_1 \vee_A M_2 = \{1b, 2a, 3d, 4c, 5e\}$, is stable with respect to both $A$ and $B$.

Similarly, we see that $M_3 = \{1b, 2c, 3d, 4a, 5e\}$ and $M_4 = \{1d, 2a, 3b, 4c, 5e\}$ are stable with respect to $A$, but have blocking pairs $(1, a)$ and $(4, d)$ respectively, under $B$. However, their \emph{meet}, $M_3 \wedge_A M_4 = \{1b, 2a, 3d, 4c, 5e\}$, is stable with respect to both $A$ and $B$.

Hence, $\mathcal{M}_{A} \setminus \mathcal{M}_B$ is not a semi-sublattice of $\Lc_A$.
\end{proof}

\bothsidesnotsublattice*

\begin{proof} [Proof of \Cref{thm:both_sides_not_sublattice}]
The example in \Cref{ex:twoSidesNotSublattice} shows that the intersection is not necessarily a sublattice in $(2, 2)$. Instance $B$ differs from $A$ in the preferences of workers $3$ and $4$ and firms $c$ and $d$.

Consider matchings $M_1 = \{1a, 2b, 3c, 4d\}$ and $M_2 = \{1b, 2a, 3d, 4c\}$ that are stable under $A$ and $B$. However, $M = M_1 \wedge_A M_2 = \{1a, 2b, 3d, 4c\}$ is not stable under $B$, as $a >_4 c$ and $4 >_a 1$, so $4a$ is a blocking pair of $M$ under $B$. 

Therefore, if there are at least two agents on each side having different preferences in $A$ and $B$, $\mathcal{M}_A \intersect \mathcal{M}_B$ is not a sublattice of $\Lc_B$. 
\end{proof}

\breakinginstancesoneside*

\begin{proof}[Proof of \Cref{thm:breaking_instances_one_side}]
We will show that any blocking pair for a perfect matching under the preference lists of instance $A$ or instance $B$ is a blocking pair under the preference lists of at least one of the instances $A$ or $B_1, \dots, B_n$, and vice versa.

Let $M$ be a perfect matching with a blocking pair under the preference lists in $A$ or in $B$. If it has a blocking pair under $A$, then $M$ is clearly not in $\mathcal{M}_{A} \intersect \mathcal{M}_{B_1} \intersect \ldots \intersect \mathcal{M}_{B_n}$. Suppose $(w, f)$ is a blocking pair for $B$ and $f >_w^B M(w)$, and so $w >_f^B M(f)$. Workers have identical lists in all instances, so $w$ prefers $f$ to $M(w)$ in every instance $B_i$. Firm $f$ has the same preference list in $B$ and in $B_k$ for some $k \in \{1, \dots, n\}$. Then $w >_f^{B_k} M(f)$, and $(w, f)$ is a blocking pair for instance $B_k$, so $M$ is not a stable matching in $\mathcal{M}_{A} \intersect \mathcal{M}_{B_1} \intersect \ldots \intersect \mathcal{M}_{B_n}$.

Let $M$ be a perfect matching with blocking pair $(w, f)$ under the preference lists in one of the instances $A, B_1, \dots, B_n$. As in the previous case, if $(w, f)$ is a blocking pair under the preference lists of $A$, then $M$ is not a stable matching in $\mathcal{M}_A \intersect \mathcal{M}_B$. Otherwise, $(w, f)$ is a blocking pair for some instance $B_k$. We will show that $(w, f)$ must be a blocking pair for $B$. 

Assume that $f$ is not the agent whose preference list is identical in $B$ and $B_k$. Then both $w$ and $f$ have identical preference lists in $A$ and in $B_k$, and so this is a blocking pair for $A$. Otherwise, it must be the case that $f$ changed its preference list in $B_k$ and has the same changed list in $B$. Then $(w, f)$ is a blocking pair for $B$ as well, as $f >_w^{B} M(w)$ and $w >_f^{B} M(f)$. We conclude that the two sets are equal. 
\end{proof}

% \begin{proof}[Proof of \Cref{thm:n_0_multiple}]
%     The proof that $\calL'$ a sublattice is similar to \Cref{prop.sublattice_one_side}. The Deferred Acceptance algorithms to find worker and firm optimal matchings work on the new compound instance $X$, where firm $f$ prefers worker $w_i$ to worker $w_j$ if and only if they does so in \textbf{all} the instances. Birkhoff's partial order for $\calL'$ can be found by finding the edge set representing $\mathcal{M}_{A_1} \cap \mathcal{M}_{A_i}$ for each instance $A_i$ and the union of these edge sets defines $\calL'$. This gives us a way to efficiently enumerate the matchings.
% \end{proof}
\breakinginstancestwoside*
\begin{proof}[Proof of \Cref{thm:breaking_instances_two_side}]
    $(\supseteq)$
    Let $\mu \notin \mathcal{M}_A \cap \mathcal{M}_B$. Then $\mu$ has a blocking pair with respect to $A$ or $B$. If it is $A$, then it is trivially not in $\mathcal{M}_A \cap \mathcal{M}_{B_{\{w_1, f_1\}}} \cap \mathcal{M}_{B_{\{w_1, f_2\}}} \cap \dots \cap \mathcal{M}_{B_{\{w_1, f_n\}}}$. So suppose it has a blocking pair $(w, f)$ with respect to $B$. 
    
    Suppose $w = w_1$ and $f \in \{f_1, \dots, f_n\}$. So $f >_{w}^B \mu(w)$ and $w >_{f}^B \mu(f)$. But these are the preferences of $w$ and $f$ in $B_{\{w, f\}}$, so $\mu \notin \mathcal{M}_A \cap \mathcal{M}_{B_{\{w_1, f_1\}}} \cap \mathcal{M}_{B_{\{w_1, f_2\}}} \cap \dots \cap \mathcal{M}_{B_{\{w_1, f_n\}}}$. 

    Else, if $w \neq w_1$ and $f \in \{f_1, \dots, f_n\}$, then $w$ has the same preferences in $A$ and $B$, as well as in every $B_{\{w_1, f\}}$, while $f$ has the same preferences in $B$ and $B_{\{w_1, f\}}$. Hence, $(w, f)$ is a blocking pair for instance $B_{w_1, f}$. 
    
    $(\subseteq)$
    Similarly, consider $\mu \notin \mathcal{M}_A \cap \mathcal{M}_{B_{w_1, f_1}} \cap \dots \cap \mathcal{M}_{B_{w_1, f_n}}$. Then $\mu$ has a blocking pair $(w, f)$, and as in the previous case, assume the blocking pair is not with respect to $A$. Then it is with respect to some instance $B_{\{w_1, f_i\}}$. 

    There are three cases:
    \begin{enumerate} 
        \item $w \neq w_1$ and $f = f_i$: Then $f_i$ has the same list in $B$ and $B_{w_1, f_i}$, $w$ has the same list in $A$, $B$, and $B_{w_1, f_i}$; hence it is a blocking pair with respect to $B$.
        
        \item $w = w_1$ and $f \neq f_i$: Then $w_1$ has the same list in $A$, $B$, and $B_{\{w_1, f_i\}}$, and $f$ has the same list in $A$, $B$, and $B_{\{w_1, f_i\}}$; hence it is a blocking pair with respect to $B$.
        
        \item $w = w_1$ and $f = f_i$: Then $w_1$ has the same list in $B$ and $B_{\{w_1, f_i\}}$, and $f$ has the same list in $B$ and $B_{\{w_1, f_i\}}$; hence it is a blocking pair with respect to $B$.
    \end{enumerate}
    The other case of $w \neq w_1$ and $f \neq f_i$ does not arise because then $(w,f)$ would be a blocking pair of $A$, which cannot be, as $\mu \in \mathcal{M}_A$. 
    
    Hence, $(w, f)$ is a blocking pair with respect to $B$, and $\mu \notin \mathcal{M}_B$; hence it is not in $\mathcal{M}_A \cap \mathcal{M}_B$.

    We conclude that the sets are the same. 
\end{proof}

\lnotsemisublattice*
\begin{proof}[Proof of \Cref{lem:1_1_not_semi_sublattice}]
Consider instances $A$ and $B$ as given in \Cref{ex:1-1notSemisublattice}. The set of workers is $\mathcal{W} = \{1, 2, 3, 4, 5\}$ and the set of firms is $\mathcal{F} = \{a, b, c, d, e\}$. Instance $B$ is obtained from instance $A$ by permuting the lists of firm $c$ and worker $3$. Consider matchings $M_1 = \{1b, 2c, 3a, 4d, 5e\}$ and $M_2 = \{1a, 2b, 3c, 4e, 5d\}$ that are both stable with respect to $A$. With respect to instance $B$, $M_1$ has the blocking pair $(3, d)$ and $M_2$ has the blocking pair $(4, c)$. Hence, $M_1$ and $M_2$ are both in $\mathcal{M}_{A} \setminus \mathcal{M}_B$. However, the \emph{join} with respect to $A$, $M_1 \vee_A M_2 = \{1a, 2b, 3c, 4d, 5e\}$, is stable with respect to $A$ and $B$. 

Similarly, we see that $M_3 = \{1c, 2a, 3b, 4d, 5e\}$ and $M_4 = \{1b, 2c, 3a, 4e, 5d\}$ are stable with respect to $A$ but have blocking pairs $(3, d)$ and $(4, c)$ respectively, under $B$. However, their \emph{meet}, $M_3 \wedge_A M_4 = \{1c, 2a, 3b, 4e, 5d\}$, is stable with respect to $A$ and $B$. 

Hence, $\mathcal{M}_{A} \setminus \mathcal{M}_B$ is not a semi-sublattice of $\Lc_A$. 
\end{proof}

\lp*
\begin{proof} [Proof of \Cref{lem:lp}]
    $\mu^A_{\theta}$ and $\mu^B_{\theta}$ are perfect matchings, and we know that the preferences of $n-1$ workers $w_1, \dots, w_{n-1}$ are the same in $A$ and $B$. This means that these $n-1$ workers are paired to $n-1$ distinct firms in both matchings. Since they are perfect matchings, the $n^{\text{th}}$ worker must be paired with the remaining firm. Hence, the matchings are identical. 
\end{proof}
\nintegralpolytope*

\begin{proof} [Proof of \Cref{thm:n_1_integral_polytope}]
    From \Cref{lem:lp}, $\mu^A_{\theta}$ and $\mu^B_{\theta}$ are the same. Therefore, since it is integral and a solution to LP~\ref{eq:lp_multi}, it is stable under both $A$ and $B$. This means that the set of all fractional robust stable matchings can be written as a convex combination of integral solutions, similar to \Cref{thm:lp_polytope}. This proves that the polytope is integral.
\end{proof}

\nmultiple*

\begin{proof}[Proof of \Cref{thm:n_1_multiple}]
    $\calL'$ is a sublattice of each $\calL_{A_i}$ as a corollary of \Cref{thm:sublattice_two_side}. The LP approach can be used to find stable matchings in the intersection. The new formulation will have $k$ blocking pair conditions—one for each instance $A_i$—to ensure that no worker–firm pair $(w,f)$ is a blocking pair for $x$ under that instance. \Cref{alg:daalgorithm_worker_both_sides} and \Cref{alg:daalgorithm_firm_both_sides} can be used to find the worker- and firm-optimal matchings. Define a room for each instance and run the Deferred Acceptance algorithm in each room according to the preferences of the corresponding instance. If the same perfect matching $M$ is obtained in every room, it is output by the algorithm and is stable with respect to all instances.

    When the instances are $(0,n)$, Birkhoff's partial order for $\calL'$ can be found by identifying the edge set representing $\mathcal{M}_{A_1} \cap \mathcal{M}_{A_i}$ for each instance $A_i$, and by \Cref{lem:1_0_multiple}, the union of these edge sets defines $\calL'$. This provides a way to efficiently enumerate the matchings.
\end{proof}

\section{Proofs of Correctness of Algorithms~\ref{alg:daalgorithm_worker_both_sides},~\ref{alg:daalgorithm_firm_both_sides} and~\ref{alg:robust_xp}}
\label{app:algo_proofs}

\nalgoworks*

The proof of \Cref{thm:n_1_algo_works} is split between \Cref{sec:daalgorithm_worker_both_sides} and \Cref{sec:daalgorithm_firm_both_sides}.

\subsection{Proof of Correctness of Algorithm~\ref{alg:daalgorithm_worker_both_sides}}
\label{sec:daalgorithm_worker_both_sides}

The algorithm involves running the Deferred Acceptance algorithm in different rooms with workers proposing, and returning the worker-optimal matching in $\mathcal{M}_A \cap \mathcal{M}_B$, given that $A$ and $B$ are $(1,n)$. Let $w_1$ be the worker with different preferences in $A$ and $B$.
\begin{lemma}
\label{lem:nomultmatchings}
    \Cref{alg:daalgorithm_worker_both_sides} terminates and always returns a perfect matching or $\boxtimes$.
\end{lemma}

\begin{proof}[Proof of \Cref{lem:nomultmatchings}]
    The algorithm always terminates since there is at least one rejection per iteration before the final iteration. 

    Suppose the algorithm terminates with two distinct perfect matchings, $\mu_A$ in room $A$ and $\mu_B$ in room $B$. 
    Then there must exist a worker $w \neq w_1$ with the same preferences in $A$ and $B$ but different matches in $\mu_A$ and $\mu_B$. Such a worker must exist since only $w_1$ changed their preferences, and the matchings are perfect. Suppose, without loss of generality, that $\mu_A(w) >_w^{A, B} \mu_B(w)$. As $w$ has the same lists in $A$ and $B$, $w$ must have proposed to $\mu_A(w)$ in room $B$, been rejected, and crossed them off their list. However, per the algorithm, $w$ should have also crossed off $\mu_A(w)$ from their list in room $A$, hence $\mu_A$ cannot be a perfect matching achieved in room $A$. 
\end{proof}

\begin{lemma}
\label{lem:n_1_worker_bp}
    If firm $f$ rejects $w$ in either room, $w$ and $f$ cannot be matched in any matching in $\mathcal{M}_A \cap \mathcal{M}_B$.
\end{lemma}

\begin{proof}[Proof of \Cref{lem:n_1_worker_bp}]
    We use the property of the DA algorithm for a single instance $A$ which states that if a firm $f$ rejects a worker $w$, then $(w, f)$ are not matched in any stable matching with respect to $A$~\cite{GaleS}. 

    Suppose, without loss of generality, that $f$ rejects $w$ in room $A$. Then $(w, f)$ are not matched in any stable matching in $\mathcal{M}_A$, and hence they cannot be matched in any stable matching in $\mathcal{M}_A \cap \mathcal{M}_B$.
\end{proof}

\begin{lemma}
\label{lem:daalgorithm_worker_both_sides}
    If $A$ and $B$ admit a stable matching under both, then \Cref{alg:daalgorithm_worker_both_sides} finds one.
\end{lemma}

\begin{proof}[Proof of \Cref{lem:daalgorithm_worker_both_sides}]
Suppose there exists such a stable matching, but the algorithm terminates by outputting $\boxtimes$. This only happens when a worker is rejected by all firms. However, by \Cref{lem:n_1_worker_bp}, this means that the worker has no feasible partner in any stable matching, which is a contradiction.
\end{proof}

\begin{lemma}
\label{lem:daalgorithm_worker_both_sides2}
    If \Cref{alg:daalgorithm_worker_both_sides} returns a matching, it must be stable under both $A$ and $B$.
\end{lemma}

\begin{proof}[Proof of \Cref{lem:daalgorithm_worker_both_sides2}]
Suppose $\mu$ is the perfect matching returned, and without loss of generality, suppose it has a blocking pair $(w, f)$ with respect to $A$. That is, $f >_w^A \mu(w)$ and $w >_f^A \mu(f)$. Then $w$ must have proposed to $f$ during some iteration of the algorithm and been rejected in room $A$ for some worker $w'$ whom $f$ preferred to $w$, eventually ending up with $\mu(f)$. As firms only get better matches as the algorithm progresses, it must be the case that $\mu(f) >_f w$, and hence $(w, f)$ is not a blocking pair.
\end{proof}

\begin{lemma}
\label{lem:daalgorithm_worker_both_sides3}
    If \Cref{alg:daalgorithm_worker_both_sides} returns a matching, it must be the worker-optimal matching.
\end{lemma}

\begin{proof}[Proof of \Cref{lem:daalgorithm_worker_both_sides3}]
If not, some worker $w \neq w_1$ has a better partner in the worker-optimal matching. But then, that worker would have proposed to that partner in every room and been rejected in at least one of them, resulting in a contradiction.
\end{proof}
\subsection{Proof of Correctness of Algorithm~\ref{alg:daalgorithm_firm_both_sides}}
\label{sec:daalgorithm_firm_both_sides}

The algorithm involves running the Deferred Acceptance algorithm in different rooms with firms proposing, and returning the firm-optimal matching in $\mathcal{M}_A \cap \mathcal{M}_B$, given that $A$ and $B$ are $(1,n)$. Let $w_1$ be the worker with different preferences in $A$ and $B$.

\begin{lemma}
\label{lem:daalgorithm_firm_both_sides}
    The \Cref{alg:daalgorithm_firm_both_sides} terminates and always returns a perfect matching or $\boxtimes$.
\end{lemma}

\begin{proof}[Proof of \Cref{lem:daalgorithm_firm_both_sides}]
    The algorithm terminates since there is at least one rejection per iteration. 

    Suppose the algorithm terminates with distinct perfect matchings $\mu_A$ in room $A$ and $\mu_B$ in room $B$. Then there is some worker $w \neq w_1$ who has the same lists in $A$ and $B$ but distinct matches $\mu_A(w)$ and $\mu_B(w)$. Suppose, without loss of generality, that $\mu_A(w) >_w^{A, B} \mu_B(w)$. Then $w$ must have received a proposal from $\mu_A(w)$ in some iteration, since to be tentatively matched, a worker must have received a proposal from that firm. Workers reject all firms that are less preferred in all rooms, so $w$ must have rejected $\mu_B(w)$ in room $R_A$ as well as in room $R_B$. Therefore, $\mu_B$ cannot be the final perfect matching in room $R_B$—a contradiction.
\end{proof}

\begin{lemma}
\label{lem:n_1_firm_bp}
    If firm $f$ is rejected by worker $w$ in either room, then $w$ and $f$ cannot be matched in any strongly stable matching in $\mathcal{M}_A \cap \mathcal{M}_B$.
\end{lemma}

\begin{proof}[Proof of \Cref{lem:n_1_firm_bp}]
We use the property of the DA algorithm for a single instance $A$, which states that if a firm $f$ is rejected by worker $w$, then $(w, f)$ are not matched in any stable matching with respect to $A$~\cite{GaleS}. 

We also note the additional fact that once a worker rejects a firm, they are only matched to partners they strictly prefer over that firm from that point onward. This is equivalent to preemptively rejecting all firms ranked below $f$—an idea we carry over to this algorithm.

Suppose, without loss of generality, that $w$ rejects $f$ in room $A$. Then $(w, f)$ are not matched in any stable matching in $\mathcal{M}_A$, and hence they cannot be matched in any stable matching in $\mathcal{M}_A \cap \mathcal{M}_B$.
\end{proof}
\begin{lemma}
\label{lem:daalgorithm_firm_both_sides2}
If $A$ and $B$ admit a stable matching under both, then \Cref{alg:daalgorithm_firm_both_sides} finds one.
\end{lemma}

\begin{proof}[Proof of \Cref{lem:daalgorithm_firm_both_sides2}]
Suppose there exists such a stable matching, but the algorithm terminates by outputting $\boxtimes$. This only happens when a firm is rejected by all workers. However, by \Cref{lem:n_1_firm_bp}, this means that the firm has no feasible partner in any of the stable matchings, which is a contradiction.
\end{proof}

\begin{lemma}
\label{lem:daalgorithm_firm_both_sides3}
If \Cref{alg:daalgorithm_firm_both_sides} returns a matching, it must be stable under both $A$ and $B$.
\end{lemma}

\begin{proof}[Proof of \Cref{lem:daalgorithm_firm_both_sides3}]
Suppose the algorithm returns a perfect matching $\mu$, and assume, without loss of generality, that it has a blocking pair $(w, f)$ with respect to $A$. That is, $w >_f^A \mu(f)$ and $f >_w^A \mu(w)$. Then $f$ must have proposed to $w$ during some iteration of the algorithm in room $R_A$ and been rejected for some firm $f'$ that $w$ preferred (under $A$). Eventually, $w$ ends up with $\mu(w)$, which it prefers the most. Hence, $\mu(w) >_w^A f$, contradicting the assumption that $(w, f)$ is a blocking pair.
\end{proof}

\begin{lemma}
\label{lem:daalgorithm_firm_both_sides4}
If \Cref{alg:daalgorithm_firm_both_sides} returns a matching, it must be the firm-optimal matching.
\end{lemma}

\begin{proof}[Proof of \Cref{lem:daalgorithm_firm_both_sides4}]
If not, some firm $f$ has a better partner in the firm-optimal matching. But then, that firm would have proposed to that partner in all rooms and been rejected in at least one room, leading to a contradiction.
\end{proof}

\subsection{Proof of Correctness of Algorithm~\ref{alg:robust_xp}}
\label{sec:robust_xp_proofs}

Let $A$ and $B$ be two instances on workers $W$ and firms $F$. Let $S$ be the set of agents whose preferences differ between $A$ and $B$. In a fixed iteration of \Cref{alg:robust_xp}, let $M$ denote the partial matching on $S$ chosen in Step~1(a), let $T \coloneqq S \cup M(S)$, and let $U \coloneqq (W \cup F)\setminus T$. The truncated instance $X$ on $U$ is obtained in Steps~1(c)--(d). Since all agents with differing preferences are removed, the preferences of agents in $U$ are identical in both $A$ and $B$.

\begin{lemma}
\label{lem:robust_completeness}
If there exists a robust stable matching $M^\star \in M_A \cap M_B$, then \Cref{alg:robust_xp} returns a perfect matching.
\end{lemma}

\begin{proof}[Proof of \Cref{lem:robust_completeness}]
Consider the iteration that selects the partial matching $M \coloneqq M^\star|_S$ in Step~1(a). Let $M' \coloneqq M^\star|_U$. 

First, note that there are no blocking pairs $(w, f)$ with $w, f \in T$ for $M$ under either instance $A$ or $B$; otherwise, $M^\star$ would not be stable in the corresponding instance. Therefore, the algorithm does not abort at Step~1(b) during this iteration.

Next, observe that $M'$ is also stable in $X$. Otherwise, any blocking pair for $M'$ in $X$ would correspond to a blocking pair in both $A$ and $B$, contradicting the stability of $M^\star$. Furthermore, since $M^\star$ is a perfect matching on $W \cup F$, $M'$ must be a perfect matching on $U$.

Since $M'$ is a perfect stable matching in $X$, the Rural Hospital Theorem implies that every stable matching in $X$ is perfect. Hence, the algorithm must output a perfect matching in this iteration.

\end{proof}

\begin{lemma}
\label{lem:robust_soundness}
If \Cref{alg:robust_xp} returns a perfect matching $M^\star$, then $M^\star \in \mathcal{M}_A \cap \mathcal{M}_B$.
\end{lemma}

\begin{proof}[Proof of \Cref{lem:robust_soundness}]
Let $M \coloneqq M^\star|_S$ be the partial matching on $S$, and let $M' \coloneqq M^\star|_U$ be the perfect stable matching in $X$ obtained in the iteration in which the algorithm returned $M^\star$. Fix any instance $I \in \{A, B\}$. Any potential blocking pair $(a, b)$ under $I$ must fall into one of the following cases:
\begin{enumerate}
    \item \emph{Both agents in $T$.} This is ruled out by Step~1(b), which discards the iteration if any such pair blocks $M$ in either instance.
    
    \item \emph{One agent in $T$ and the other in $U$.} Without loss of generality, let $a \in T$ and $b \in U$. If $(a, b)$ is a blocking pair under $I$, then $b \succ_a^I M(a)$ and $a \succ_b^I M'(b)$. However, since $b \succ_a^I M(a)$, agent $a$ and all agents ranked worse than $a$ are removed from $b$'s preference list in instance $X$ during Step~1(d). Thus, $b$ cannot be matched to an agent ranked below $a$, contradicting $a \succ_b^I M'(b)$.
    
    \item \emph{Both agents in $U$.} If $(a, b)$ is a blocking pair, then $b \succ_a^I M'(a)$ and $a \succ_b^I M'(b)$. Note that we only truncate the lower ends of each agent’s preference list. In particular, if $x \in U$ is matched to $M'(x)$, then every agent that $x$ prefers to $M'(x)$ in instance $I$ must still appear in $x$'s preference list in instance $X$. Therefore, the pair $(a, b)$ must also be blocking under instance $X$, contradicting the fact that Step~1(e) returned $M'$ as stable.
\end{enumerate}
Thus $M^\star$ admits no blocking pair in $I$, and since $I \in \{A, B\}$ was arbitrary, $M^\star \in \mathcal{M}_A \cap \mathcal{M}_B$.
\end{proof}

\begin{lemma}
\label{lem:robust_xp}
Algorithm~\ref{alg:robust_xp} decides the existence of a robust stable matching in time $O\!\left(n^{p+q+2}\right)$.
\end{lemma}

\begin{proof}[Proof of \Cref{lem:robust_xp}]
We bound the work per iteration and the total number of iterations:
\begin{enumerate}
    \item \emph{Outer enumeration.} Step~1 iterates over all injective partner assignments to the $|S| = p + q$ agents, which yields $O(n^{p+q})$ possibilities.
    
    \item \emph{Per-iteration work.} Step~1(b) checks for blocking pairs within $T$; Steps~1(c)--(d) construct the truncated instance $X$; Step~1(f) finds a stable matching in $X$; and Step~1(g) verifies whether $M'$ is a perfect matching on $U$. Each of these steps can be implemented in $O(n^2)$ time using standard data structures from the $O(n^2)$ implementation of the Deferred Acceptance algorithm.
\end{enumerate}

Hence, the overall runtime is $O(n^{p+q+2})$. Correctness follows from \Cref{lem:robust_completeness,lem:robust_soundness}.
\end{proof}

\begin{lemma}
\label{lem:robust_xp_enumeration}
The set of robust stable matchings of instances $A$ and $B$ can be enumerated with $O(n^{p+q+2})$ delay.
\end{lemma}

\begin{proof}[Proof of \Cref{lem:robust_xp_enumeration}]
\Cref{alg:robust_xp} can be modified to enumerate all robust stable matchings as follows: instead of returning a single stable matching in Step~1(f), enumerate all stable matchings of the truncated instance $X$ using a known enumeration algorithms of its stable matching lattice structure with $O(n^2)$ delay per matching.

Since the outer loop of Step~1 iterates over $O(n^{p+q})$ choices of partial matchings on $S$, and each such iteration performs $O(n^2)$ work (or terminates early at Step~1(b)), the delay between any two outputs is at most $O(n^{p+q+2})$.

Hence, all robust stable matchings can be enumerated with $O(n^{p+q+2})$ delay.
\end{proof}

Together, the preceding lemmas establish \Cref{thm:robust_xp}.

\robustalgoworks*

\section{Some Surprising Examples}
\label{app:examples}

In this section, we assume that agents on both sides change preferences arbitrarily. We will demonstrate examples showing that the \emph{join} and \emph{meet} operations are no longer equivalent in $\Lc_A$ and $\Lc_B$. As a result, the argument used to prove \Cref{prop.sublattice_one_side} does not extend. In fact, for stable matchings $M, M' \in \mathcal{M}_A \cap \mathcal{M}_B$, it may be the case that some workers prefer $M$ to $M'$ in $A$ but prefer $M'$ to $M$ in $B$, which is not possible in Setting I. Specifically, even if the intersection is a sublattice, the base set is $\mathcal{M}_A \cap \mathcal{M}_B$, but the partial orders $\leq_A$ and $\leq_B$ may differ.

A trial-and-error approach shows that the intersection may form a sublattice in many instances. The structure of the sublattice may be more general, as the \emph{join} and \emph{meet} operators may not be equivalent in $\Lc_A$ and $\Lc_B$ (i.e., $(\vee_A, \wedge_A) \neq (\vee_B, \wedge_B)$). The example in \Cref{ex:both_sides_twisted_lattice} demonstrates this: the intersection is a sublattice, but the partial ordering of the stable matchings in $\mathcal{M}_A \cap \mathcal{M}_B$ differs.

\begin{figure}[ht]
	\begin{wbox}
	\centering
	\begin{minipage}{.45\linewidth}
        \centering
		\begin{tabular}{p{0.5cm}|p{0.5cm}p{0.5cm}p{0.5cm}p{0.5cm}p{0.5cm}p{0.5cm}}
			1  & a & b & c & d & e & f\\
			2  & b & a & c & d & e & f\\
			3  & c & d & a & b & e & f\\
			4  & d & c & a & b & e & f\\
			5  & e & f & a & b & c & d\\
			6  & f & e & a & b & c & d
		\end{tabular}
		
		\hspace{0.2cm}
		
		Worker preferences in $A$
	\end{minipage}%
	\begin{minipage}{.45\linewidth}
        \centering
		\begin{tabular}{p{0.5cm}|p{0.5cm}p{0.5cm}p{0.5cm}p{0.5cm}p{0.5cm}p{0.5cm}}
            a  & 2 & 1 & 3 & 4 & 5 & 6\\
			b  & 1 & 2 & 3 & 4 & 5 & 6\\
			c  & 4 & 3 & 1 & 2 & 5 & 6\\
			d  & 3 & 4 & 1 & 2 & 5 & 6\\
			e  & 6 & 5 & 1 & 2 & 3 & 4\\
			f  & 5 & 6 & 1 & 2 & 3 & 4
		\end{tabular}
		
		\hspace{0.2cm}
		
		Firm preferences in $A$
	\end{minipage}

	\hspace{0.5cm}

	\begin{minipage}{.45\linewidth}
        \centering
		\begin{tabular}{p{0.5cm}|p{0.5cm}p{0.5cm}p{0.5cm}p{0.5cm}p{0.5cm}p{0.5cm}}
            \cellcolor{red!25}1  & \cellcolor{red!25}b & \cellcolor{red!25}a & c & d & e & f\\
			\cellcolor{red!25}2  & \cellcolor{red!25}a & \cellcolor{red!25}b & c & d & e & f\\
			3  & c & d & a & b & e & f\\
			4  & d & c & a & b & e & f\\
			5  & e & f & a & b & c & d\\
			6  & f & e & a & b & c & d
		\end{tabular}
		
		\hspace{0.2cm}
		
		Worker preferences in $B$
	\end{minipage}%
	\begin{minipage}{.45\linewidth}
        \centering
		\begin{tabular}{p{0.5cm}|p{0.5cm}p{0.5cm}p{0.5cm}p{0.5cm}p{0.5cm}p{0.5cm}}
            \cellcolor{red!25}a  & \cellcolor{red!25}1 & \cellcolor{red!25}2 & 3 & 4 & 5 & 6\\
			\cellcolor{red!25}b  & \cellcolor{red!25}2 & \cellcolor{red!25}1 & 3 & 4 & 5 & 6\\
			c  & 4 & 3 & 1 & 2 & 5 & 6\\
			d  & 3 & 4 & 1 & 2 & 5 & 6\\
			e  & 6 & 5 & 1 & 2 & 3 & 4\\
			f  & 5 & 6 & 1 & 2 & 3 & 4
		\end{tabular}
		
		\hspace{0.2cm}
		
		Firm preferences in $B$
	\end{minipage}
	\end{wbox}
	\caption{Worker and firm preferences (from most preferred to least)}
	\label{ex:both_sides_twisted_lattice}
\end{figure}

Instance $B$ is obtained from instance $A$ by permuting the preferences of workers $1$ and $2$, and firms $a$ and $b$. It can be seen that taking the join (or meet) of any two matchings in $\mathcal{M}_A \cap \mathcal{M}_B$ yields another matching in the intersection, regardless of which instance is considered. However, workers $1$ and $2$ prefer $M_2$ to $M_1$ in $A$, while they prefer $M_1$ to $M_2$ in $B$. \Cref{fig.example} illustrates this change in the partial ordering of matchings in $\mathcal{M}_A \cap \mathcal{M}_B$, depending on the underlying instance.

$M_1 = \{1b, 2a, 3d, 4c, 5e, 6f\}$ and $M_2 = \{1a, 2b, 3c, 4d, 5f, 6e\}$ are both stable under $A$ and $B$. \\
$X_1 = M_1 \wedge_A M_2 = \{1a, 2b, 3c, 4d, 5e, 6f\}$ and $X_2 = M_1 \vee_A M_2 = \{1b, 2a, 3d, 4c, 5f, 6e\}$. \\
$Y_1 = M_1 \wedge_B M_2 = \{1b, 2a, 3c, 4d, 5e, 6f\}$ and $Y_2 = M_1 \vee_B M_2 = \{1a, 2b, 3d, 4c, 5f, 6e\}$.

\begin{figure}[ht]
\centering
\begin{tikzpicture}
\tikzset{
mydot/.style={
  fill,
  circle,
  inner sep=1.5pt
  }
}
\path (0, 0) coordinate (A) (2, 2) coordinate (B) (4, 0) coordinate (C) (2, -2) coordinate (D);
\draw (A)-- (B) node [midway, above]{};
\draw (B)-- (C) node [midway, above]{};
\draw (A)-- (D) node [midway, above]{};
\draw (D)-- (C) node [midway, above]{};

\path (0, -2) coordinate (E) (2, 0) coordinate (F) (4, -2) coordinate (G) (2, -4) coordinate (H);
\draw (E)-- (F) node [midway, above]{};
\draw (F)-- (G) node [midway, above]{};
\draw (E)-- (H) node [midway, above]{};
\draw (H)-- (G) node [midway, above]{};

\draw (A)-- (E) node [midway, above]{};
\draw (B)-- (F) node [midway, above]{};
\draw (C)-- (G) node [midway, above]{};
\draw (D)-- (H) node [midway, above]{};

\node[mydot,label={[align=center]left:$\color{blue}Y_1$}] at (A) {};
\node[mydot,label={[align=center]above:$\cellcolor{red!25}X_1$}] at (B) {};
\node[mydot,label={[align=center]right:$M_2$}] at (C) {};
\node[mydot,label={[align=center]right:$ $}] at (D) {};
\node[mydot,label={[align=center]left:$M_1$}] at (E) {};
\node[mydot,label={[align=center]right:$ $}] at (F) {};
\node[mydot,label={[align=center]right:$\color{blue}Y_2$}] at (G) {};
\node[mydot,label={[align=center]below:$\cellcolor{red!25}X_2$}] at (H) {};

\end{tikzpicture}
\hspace{2cm}
\begin{tikzpicture}
\tikzset{
mydot/.style={
  fill,
  circle,
  inner sep=1.5pt
  }
}
\path (0, 0) coordinate (A) (2, 2) coordinate (B) (4, 0) coordinate (C) (2, -2) coordinate (D);
\draw (A)-- (B) node [midway, above]{};
\draw (B)-- (C) node [midway, above]{};
\draw (A)-- (D) node [midway, above]{};
\draw (D)-- (C) node [midway, above]{};

\path (0, -2) coordinate (E) (2,0) coordinate (F) (4, -2) coordinate (G) (2, -4) coordinate (H);
\draw (E)-- (F) node [midway, above]{};
\draw (F)-- (G) node [midway, above]{};
\draw (E)-- (H) node [midway, above]{};
\draw (H)-- (G) node [midway, above]{};

\draw (A)-- (E) node [midway, above]{};
\draw (B)-- (F) node [midway, above]{};
\draw (C)-- (G) node [midway, above]{};
\draw (D)-- (H) node [midway, above]{};

\node[mydot,label={[align=center]left:$M_1$}] at (A) {};
\node[mydot,label={[align=center]above:$\color{blue}Y_1$}] at (B) {};
\node[mydot,label={[align=center]right:\cellcolor{red!25}$X_1$}] at (C) {};
\node[mydot,label={[align=center]right:$ $}] at (D) {};
\node[mydot,label={[align=center]left:\cellcolor{red!25}$X_2$}] at (E) {};
\node[mydot,label={[align=center]right:$ $}] at (F) {};
\node[mydot,label={[align=center]right:$M_2$}] at (G) {};
\node[mydot,label={[align=center]below:$\color{blue}Y_2$}] at (H) {};

\end{tikzpicture}
\caption{Lattices $\Lc_A$ and $\Lc_B$ for example in \Cref{ex:both_sides_twisted_lattice}}
\label{fig.example}
\end{figure}

\section{Algorithms when One Side Changes Preferences}
\label{app:one_side}

In this setting, we consider instances where the workers' preferences remain unchanged while the firms' preferences differ, i.e., instances that are $(0,n)$. The case of $(n,0)$ is equivalent by symmetry. We begin by recalling the following proposition from \cite{GMRV-nearby-instances}.

\begin{proposition}
\label{prop.sublattice_one_side} (\cite{GMRV-nearby-instances}, Proposition 6)
If $A$ and $B$ are $(0,n)$, then the matchings in $\mathcal{M}_A \cap \mathcal{M}_B$ form a sublattice in each of the two lattices.
\end{proposition}

\begin{proof}[Proof of \Cref{prop.sublattice_one_side}]
If $| \mathcal{M}_A \cap \mathcal{M}_B| \leq 1$, then $\mathcal{M}_A \cap \mathcal{M}_B$ is trivially a sublattice of both $\Lc_A$ and $\Lc_B$. So assume $| \mathcal{M}_A \cap \mathcal{M}_B| > 1$ and let $M_1$ and $M_2$ be matchings in $\mathcal{M}_A \cap \mathcal{M}_B$. Let $\vee_A$ and $\vee_B$ denote the join operations under $A$ and $B$, respectively, and let $\wedge_A$ and $\wedge_B$ denote the meet operations.

By the definition of the join operation in \Cref{subsection.latticeOfSM}, $M_1 \vee_A M_2$ is the matching where each worker is assigned to their less-preferred partner (or equivalently, each firm to its more-preferred partner) among $M_1$ and $M_2$ under instance $A$. Since every worker has the same preference list in both $A$ and $B$, the less-preferred partner of each worker between $M_1$ and $M_2$ is the same under both instances. Therefore, $M_1 \vee_A M_2 = M_1 \vee_B M_2$. A similar argument holds for the meet, showing $M_1 \wedge_A M_2 = M_1 \wedge_B M_2$. Thus, both $M_1 \vee_A M_2$ and $M_1 \wedge_A M_2$ are in $\mathcal{M}_A \cap \mathcal{M}_B$, proving that $\mathcal{M}_A \cap \mathcal{M}_B$ forms a sublattice in both lattices.
\end{proof}

The proposition shows that the join and meet of stable matchings in $\mathcal{M}_A \cap \mathcal{M}_B$ are the same under the ordering of $\Lc_A$ and $\Lc_B$. This implies that workers agree on which matchings in the intersection they prefer. The key driver of this result is that workers have the \emph{same preferences} in both instances, allowing many arguments used in single-instance stable matching lattices to carry over. Additionally, the set $\mathcal{M}_A \cap \mathcal{M}_B$ forms the same sublattice in both $\Lc_A$ and $\Lc_B$ in the sense that the partial orders coincide; that is, $(\mathcal{M}_A \cap \mathcal{M}_B, \preceq_A) = (\mathcal{M}_A \cap \mathcal{M}_B, \preceq_B)$. This implies the existence of worker-optimal and firm-optimal stable matchings in $\mathcal{M}_A \cap \mathcal{M}_B$, and suggests that a variant of the Deferred Acceptance algorithm could succeed in this setting.

This motivates the notion of a *compound instance* that encodes agent preferences across both input instances. We present a variant of the Deferred Acceptance algorithm (\Cref{alg:daalgorithm_firm_one_side}) on this new instance to find such matchings. See \Cref{fig:compoundinstance} for the construction.

\begin{definition}
\label{def:compound_instance}
Let $A$ and $B$ be $(0,n)$. Define the compound instance $X$ as follows:
\begin{enumerate}
    \item For each worker $w \in \mathcal{W}$, the preference list in $X$ is the same as in $A$ and $B$.
    \item For each firm $f \in \mathcal{F}$, $f$ prefers worker $w_i$ to $w_j$ in $X$ if and only if $f$ prefers $w_i$ to $w_j$ in both $A$ and $B$:
    \[
    \forall f \in \mathcal{F}, \forall w_i, w_j \in \mathcal{W}, \quad w_i >^X_f w_j \iff (w_i >^A_f w_j \text{ and } w_i >^B_f w_j).
    \]
\end{enumerate}
\end{definition}

In the compound instance $X$, workers have totally ordered preference lists, while firms may have partial orders. If a firm $f$ does not strictly prefer $w_i$ to $w_j$ in $X$, we say $f$ is indifferent between them, denoted $w_i \sim^X_f w_j$.

Given a matching $M$, a worker-firm pair $(w, f) \not\in M$ is said to be a strongly blocking pair with respect to $M$ if $w$ prefers $f$ to their assigned partner in $M$ and $f$ either strictly prefers $w$ or is indifferent between $w$ and its partner in $M$. A matching with no strongly blocking pairs is said to be \emph{strongly stable} under $X$. Let $\mathcal{M}_X$ denote the set of matchings that are strongly stable under $X$. See \cite{IRVING-Indifferences,Fleiner} for more information on Strongly stable matchings.

The following lemma shows that compound instances are meaningful in our context:

\begin{lemma} \cite{Fleiner}
\label{lem:strongly_stable}
If $X$ is the compound instance of $A$ and $B$, then a matching $M$ is strongly stable under $X$ if and only if it is stable under both $A$ and $B$. That is, $\mathcal{M}_X = \mathcal{M}_A \cap \mathcal{M}_B$.
\end{lemma}

\begin{proof}[Proof of \Cref{lem:strongly_stable}]
Let $(w, f)$ be a strongly blocking pair for matching $M$ under $X$. Since $w$ has identical preferences in $A$ and $B$, they must strictly prefer $f$ to $M(w)$ in both instances. Firm $f$ must either strictly prefer $w$ to its partner in $M$ or be indifferent between them. In either case, $f$ strictly prefers $w$ in at least one of $A$ or $B$, making $(w, f)$ a blocking pair under that instance. Thus, $M$ is not stable under both.

Conversely, if $(w, f)$ is a blocking pair under $A$ or $B$, then it satisfies the conditions for being strongly blocking under $X$. Therefore, $\mathcal{M}_X = \mathcal{M}_A \cap \mathcal{M}_B$.
\end{proof}
\begin{algorithm}[ht]
	\begin{wbox}
		\textsc{DeferredAcceptanceAlgorithm}($A$): \\
		\textbf{Input:} Stable matching instance $A$ \\
		\textbf{Output:} Perfect matching $M$ \\
        Each worker has a list of all firms in decreasing order of priority.
		\begin{enumerate}
			\item Until all firms receive a proposal, do:
			\begin{enumerate}
				\item $\forall w \in W$, $w$ proposes to their best uncrossed firm.
				\item $\forall f \in F$, $f$ tentatively accepts their \textit{\textbf{best}} proposal and rejects the rest.
				\item $\forall w \in W$, if $w$ is rejected by a firm $f$, they cross $f$ off their list.
			\end{enumerate}  
			\item Output the perfect matching $M$.
		\end{enumerate}
	\end{wbox}
	\caption{Deferred Acceptance Algorithm with workers proposing, taken from \cite{Book-Online}}
	\label{alg:gs_algorithm} 
\end{algorithm} 

\begin{algorithm}[ht]
	\begin{wbox}
		\textsc{CompoundInstance}($A, B$): \\
		\textbf{Input:} Stable matching instances $A$ and $B$ with the same preferences on the workers' side. \\
		\textbf{Output:} Instance $X$.
		\begin{enumerate}
			\item $\forall w \in W$, $w$'s preferences in $X$ are the same as in $A$ and $B$.
			\item $\forall f \in F$, $\forall w_i, w_j \in W$, $w_i >_f^X w_j$ if and only if $w_i >_f^A w_j$ and $w_i >_f^B w_j$.
			\item Return $X$.
		\end{enumerate}
	\end{wbox}
	\caption{Subroutine for constructing a compound instance.}
	\label{fig:compoundinstance} 
\end{algorithm} 

\subsection{Worker and firm optimal matchings}
\label{subsection.alg_one_sided_errors}

\begin{algorithm}[ht]
	\begin{wbox}
		\textsc{WorkerOptimalForCompoundInstance}($X$): \\
		\textbf{Input:} Compound stable matching instance $X$ \\
		\textbf{Output:} Perfect matching $M$ or $\boxtimes$  \\
        Each worker has a list of all firms, initially uncrossed, in decreasing order of priority.
		\begin{enumerate}
			\item Until all workers receive an acceptance or some worker is rejected by all firms, do:
			\begin{enumerate}
				\item $\forall w \in W$: $w$ proposes to their best uncrossed firm.
				\item $\forall f \in F$: $f$ tentatively accepts their \textbf{\textit{best-ever}} proposal, i.e., the proposal that they strictly prefer to every other proposal ever received, and rejects the rest.
				\item $\forall w \in W$: if $w$ is rejected by a firm $f$, they cross $f$ off their list. 
			\end{enumerate}  
			\item If all workers get an acceptance, output the perfect matching ($M$); else output $\boxtimes$.
		\end{enumerate}
	\end{wbox}
	\caption{\cite{IRVING-Indifferences}Finding the worker-optimal stable matching using compound instance.}
	\label{alg:daalgorithm_worker_one_side} 
\end{algorithm} 

\begin{algorithm}[ht]
	\begin{wbox}
		\textsc{FirmOptimalForCompoundInstance}($X$): \\
		\textbf{Input:} Compound stable matching instance $X$ \\
		\textbf{Output:} Perfect matching $M$ or $\boxtimes$ \\
        Each firm has a DAG over all workers, initially uncrossed, representing its partial order preferences.
		\begin{enumerate}
			\item Until all firms receive an acceptance or some firm is rejected by all workers, do:
			\begin{enumerate}
				\item $\forall f \in F$: $f$ proposes to the \textit{\textbf{best uncrossed workers}}, i.e., all uncrossed workers $w$ such that there is no uncrossed $w'$ with $w' >^X_f w$.
				\item $\forall w \in W$: $w$ tentatively accepts the best proposal and rejects the rest.
				\item $\forall f \in F$: if $f$ is rejected by a worker $w$, they cross $w$ off their DAG.
			\end{enumerate}
			\item If all firms receive an acceptance, output the perfect matching ($M$); else output $\boxtimes$.
		\end{enumerate}
	\end{wbox}
	\caption{\cite{IRVING-Indifferences}Finding the firm-optimal stable matching using compound instance.}
	\label{alg:daalgorithm_firm_one_side} 
\end{algorithm} 

The original Deferred Acceptance Algorithm(\Cref{alg:gs_algorithm}) works in iterations. In each iteration, three steps occur: (a) the proposing side (say, workers) proposes to their best firm that hasn’t yet rejected them; (b) each firm tentatively accepts their best proposal of the round and rejects all others; (c) all workers cross off the firms that rejected them. The loop continues until a perfect matching is found and output. The key idea is that once a rejection occurs, that worker–firm pair cannot be part of any stable matching. 

We modify this algorithm to find the worker- and firm-optimal stable matchings in the intersection for the $(0, n)$ setting, ensuring that strong blocking pairs are prevented.

\begin{theorem}
\label{thm:n_0_algo_works}
\Cref{alg:daalgorithm_worker_one_side} and \Cref{alg:daalgorithm_firm_one_side} find the worker- and firm-optimal matchings respectively, if they exist.
\end{theorem}

The worker-optimal Algorithm in \Cref{alg:daalgorithm_worker_one_side} (simplified algorithm of \cite{IRVING-Indifferences} and equivalent to algorithms in \cite{NP-Two-stable, genPreferences, GMRV-nearby-instances}) differs from the classic DA algorithm as follows: firms accept a proposal only if it is strictly better than every proposal they have ever received (including from previous rounds). This avoids strong blocking pairs involving firm indifference. Note that a firm may reject all proposals in a round if it is indifferent among the proposing workers, which is necessary to maintain strong stability.

In the firm-optimal Algorithm in \Cref{alg:daalgorithm_firm_one_side} (equivalent to Algorithm SUPER in \cite{IRVING-Indifferences}), firms propose using a DAG of partial preferences. Each firm proposes to all uncrossed workers $w$ such that no uncrossed worker $w'$ exists with $w' >^X_f w$. Workers (with total preferences) tentatively accept their best proposal and reject the rest. If a worker rejects a firm, then by strict preferences, they cannot form a blocking pair later. If multiple workers tentatively accept a firm’s proposals, some may later reject the firm, and the algorithm either continues or terminates with no perfect matching.

Note: when $A = B$, then $X$ has total preferences on both sides, and these algorithms reduce to the standard DA algorithm.

\begin{remark} (\cite{Fleiner})
Strongly stable matchings in an instance $Y$ with partial order preferences on one side form a lattice. Since each partial order has finitely many linear extensions, $\mathcal{M}_Y$ can be written as the intersection of finitely many lattices. However, as the number of linear extensions may be exponential in $n$, this does not yield an efficient method for computing the Birkhoff partial order.
\end{remark}

\subsection{Proof of Correctness of Algorithm~\ref{alg:daalgorithm_worker_one_side}}
\label{sec:daalgorithm_worker_one_side}

We have instances $A$ and $B$ which are $(0,n)$, meaning that workers have complete order preferences and firms have partial order preferences in $X$. In \Cref{alg:daalgorithm_worker_one_side}, workers propose, and unlike in \Cref{alg:gs_algorithm}, firms tentatively accept only the worker whom they strictly prefer to every other proposer they have received—in the current round and all previous rounds.

\begin{lemma}
\label{lem:n_0_worker_bp}
If firm $f$ rejects worker $w$ in any round of \Cref{alg:daalgorithm_worker_one_side}, then $w$ and $f$ can never be matched in any matching that is strongly stable under $X$.
\end{lemma}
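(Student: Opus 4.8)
The plan is to prove this by induction on the round number, adapting the classical Gale--Shapley argument that ``no worker is ever rejected by a firm it could be matched to in a stable matching'' to the strongly-stable setting of the compound instance $X$. The inductive hypothesis is: for every rejection occurring in a round strictly before the current one, the rejected worker--firm pair lies in no matching strongly stable under $X$. The one structural fact I would lean on throughout is that workers have the \emph{same} total order in $A$, $B$, and $X$, so a worker proposes to $f$ only after being rejected by every firm it strictly prefers to $f$, and those rejections all occur in earlier rounds; the inductive hypothesis then says such a worker cannot be matched above $f$ in any strongly stable matching.

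For the inductive step, suppose $f$ rejects $w$ in the current round. Since $f$ accepts a proposal only when it strictly $X$-prefers it to every proposal it has \emph{ever} received, the rejection of $w$ means there is some other worker $w^*$ that has proposed to $f$ and is not strictly $X$-dominated by $w$; that is, $w^* >_f^X w$ or $w^* \sim_f^X w$. Because $w^*$ proposed to $f$, it was rejected by every firm it strictly prefers to $f$, so by the inductive hypothesis $w^*$ cannot be matched above $f$ in any strongly stable $M$, whence $M(w^*)=f$ or $f >_{w^*} M(w^*)$. Now assume for contradiction that some strongly stable $M$ contains $(w,f)$, so $M(f)=w$ and $M(w^*)\neq f$, forcing $f >_{w^*} M(w^*)$. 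Combined with $w^* >_f^X w$ or $w^* \sim_f^X w$, the pair $(w^*,f)$ strongly blocks $M$: $w^*$ strictly prefers $f$ to its partner, and $f$ either prefers $w^*$ to its partner $w$ or is indifferent between them. This contradicts $M \in \Mc_X$, completing the induction.

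The base case is absorbed into the step, since in the first round a proposer $w^*$ proposes to its top firm $f$ and the claim ``$w^*$ cannot be matched above $f$'' holds vacuously. The part I expect to be the crux is the indifference/incomparability case $w^* \sim_f^X w$, which is exactly where the algorithm departs from ordinary deferred acceptance by rejecting proposals that are not \emph{strictly} best. Making this case yield a \emph{strong} blocking pair (rather than an ordinary one) is what forces the strong-stability notion characterized in Lemma~\ref{lem:strongly_stable}, and it is precisely why a firm must reject a proposal even when it is only tied at the top. I would take care to verify that the correct object to quantify over is the set of proposals received across \emph{all} past rounds, not merely the current one, since the acceptance rule compares against the entire history.
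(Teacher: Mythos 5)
Your proof is correct and follows essentially the same route as the paper's: induction on the round of rejection, identifying a worker $w'$ whose proposal to $f$ is not strictly $X$-dominated by $w$'s, using the inductive hypothesis to conclude $f >_{w'} M(w')$, and exhibiting $(w',f)$ as a strong blocking pair. The only cosmetic difference is that you absorb the first-round base case into the inductive step, whereas the paper treats it separately.
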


\begin{proof}[Proof of \Cref{lem:n_0_worker_bp}]
Assume there is a rejection (otherwise, the result is trivial). Let firm $f$ reject worker $w$. We will show that any matching $M$ such that $(w,f) \in M$ has a strong blocking pair. We proceed by induction on the round number in which the rejection occurs.

Assume the rejection happens in the first round. Then $f$ only rejects $w$ if it receives a proposal from another worker $w'$ whom it prefers or is indifferent to. Since $w'$ proposed to $f$ in the first round, $w'$ must prefer $f$ to $M(w')$. Hence, $(w', f)$ is a strong blocking pair to $M$.

Now assume the rejection occurs in round $k > 1$. If $f$ rejects $w$, then there exists a worker $w'$ who has proposed to $f$ in this or a previous round and such that $w' >^X_f w$ or $w' \sim^X_f w$. If $w'$ was rejected by $f$ in an earlier round, then by the induction hypothesis, $w'$ cannot be matched to any firm it proposed to before $f$, including $f$, in any stable matching. Hence, $f >_{w'} M(w')$, and so $(w', f)$ is a strong blocking pair.
\end{proof}

\begin{lemma}
\label{lem:n_0_worker_bp2}
If $X$ admits a strongly stable matching, then \Cref{alg:daalgorithm_worker_one_side} finds one.
\end{lemma}

\begin{proof}[Proof of \Cref{lem:n_0_worker_bp2}]
Suppose there exists a strongly stable matching, but the algorithm terminates by outputting $\boxtimes$. This only happens when a worker is rejected by all firms. However, by \Cref{lem:n_0_worker_bp}, this means that the worker has no feasible partner in any strongly stable matching, which is a contradiction.
\end{proof}

\begin{lemma}
\label{lem:n_0_worker_bp3}
If \Cref{alg:daalgorithm_worker_one_side} returns a matching, it must be strongly stable under $X$.
\end{lemma}

\begin{proof}[Proof of \Cref{lem:n_0_worker_bp3}]
Assume not. Let $(w, f)$ be a blocking pair for the matching returned by the algorithm. Then $w$ must have proposed to $f$ during the execution and been rejected. Hence, $f$ must be matched to someone it strictly prefers at the end of the algorithm, implying that $(w, f)$ is not a strong blocking pair—a contradiction.
\end{proof}

\begin{lemma}
\label{lem:n_0_worker_bp4}
If \Cref{alg:daalgorithm_worker_one_side} returns a matching, it must be the worker-optimal strongly stable matching.
\end{lemma}

\begin{proof}[Proof of \Cref{lem:n_0_worker_bp4}]
If not, then some worker $w$ has a better partner in the worker-optimal matching. But in that case, $w$ must have proposed to that partner and been rejected during the algorithm, leading to a contradiction.
\end{proof}

\subsection{Proof of Correctness of Algorithm~\ref{alg:daalgorithm_firm_one_side}}
\label{sec:daalgorithm_firm_one_side}

We again have instances $A$ and $B$ which are $(0,n)$, meaning that workers have complete order preferences and firms have partial order preferences in $X$. In \Cref{alg:daalgorithm_firm_one_side}, firms propose, and unlike in \Cref{alg:gs_algorithm}, they may propose to multiple workers in the same iteration.

\begin{lemma}
\label{lem:n_0_firm_bp}
If worker $w$ rejects firm $f$ in any round of \Cref{alg:daalgorithm_firm_one_side}, then $w$ and $f$ can never be matched in any matching stable under both instances $A$ and $B$.
\end{lemma}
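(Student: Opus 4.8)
The plan is to adapt the induction used for the worker-proposing algorithm (Lemma~\ref{lem:n_0_worker_bp}) to the firm-proposing algorithm~\ref{alg:daalgorithm_firm_one_side}, paying attention to the asymmetry of the compound instance $X$: firms carry a partial order (the DAG) while workers are totally ordered. By Lemma~\ref{lem:strongly_stable}, ``matched in a matching stable under both $A$ and $B$'' is the same as ``matched in a strongly stable matching under $X$'', so it suffices to show that if $w$ rejects $f$ during the algorithm, then every matching $M$ with $(w,f)\in M$ admits a \emph{strong} blocking pair under $X$ and hence is not strongly stable. I would induct on the round number $k$ at which the rejection occurs.

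The single observation driving both the base case and the inductive step is the following. When $w$ rejects $f$ in round $k$, it is because $w$ is (tentatively) holding some firm $f'$ that proposed to it in round $k$; since $w$ is totally ordered, $f' >_w f$. The fact that $f'$ proposes to $w$ in round $k$ certifies that $w$ is a maximal element of $f'$'s remaining DAG at that point, i.e., every worker $w''$ with $w'' >^X_{f'} w$ has already been crossed off $f'$'s DAG. By step~(c) of the algorithm this can only happen if such a $w''$ rejected $f'$ in some earlier round $k'' < k$ (in the base case $k=1$ no worker has yet been crossed off, so no such $w''$ exists at all).

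I would then fix any $M$ with $M(w)=f$ and argue that $(w,f')$ is a strong blocking pair; note $(w,f')\notin M$ since $M(w)=f\neq f'$. The worker side of the definition holds immediately, as $f' >_w f = M(w)$. For the firm side I must rule out $M(f') >^X_{f'} w$ (if $M(f')$ is not strictly preferred to $w$, then $f'$ either strictly prefers $w$ or is indifferent, i.e., $w$ and $M(f')$ are incomparable in the DAG, and either case makes $(w,f')$ strongly blocking). Suppose for contradiction $M(f') >^X_{f'} w$. Then $M(f')$ is strictly better than $w$ for $f'$, so by the maximality observation $M(f')$ must have been crossed off $f'$'s DAG before round $k$, meaning $M(f')$ rejected $f'$ in some round $k'' < k$. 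The induction hypothesis applied to that earlier rejection says $M(f')$ and $f'$ can never be matched in a strongly stable matching, contradicting $(M(f'),f')\in M$. Hence $M(f') \not>^X_{f'} w$, so $(w,f')$ is a strong blocking pair and $M$ is not strongly stable.

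The main obstacle, and the one genuinely new feature relative to Lemma~\ref{lem:n_0_worker_bp}, is handling the partial order on the firm side correctly: ``$f'$ weakly prefers $w$'' must be read in the \emph{strong} sense, simultaneously covering strict preference and incomparability (indifference in $X$), both of which certify a strong blocking pair. Everything rests on the claim that $f'$ proposing to $w$ witnesses the maximality of $w$ in $f'$'s current DAG, and that the only mechanism removing a strictly-better worker from that DAG is a prior rejection — precisely the place where the induction hypothesis is invoked.
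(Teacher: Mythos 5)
Your proof is correct and follows essentially the same route as the paper's: induction on the round in which the rejection occurs, using the fact that a firm proposes only to maximal uncrossed workers of its DAG, and then splitting on whether $f'$ (the firm $w$ is holding) strictly prefers $M(f')$ to $w$ — if not, $(w,f')$ is a strong blocking pair; if so, $M(f')$ must have rejected $f'$ earlier and the induction hypothesis applies. The only cosmetic point is that in the second case you should conclude directly that $M$ is not strongly stable because it contains the pair $(M(f'),f')$, rather than framing it as a ``contradiction'' that forces $M(f') \not>^X_{f'} w$ (which would require first assuming $M$ strongly stable); this does not affect the substance, and the paper's own case analysis is structured the same way.
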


\begin{proof}[Proof of \Cref{lem:n_0_firm_bp}]
Suppose, for contradiction, that $w$ rejects firm $f$ in some round, yet they are matched in a stable matching $\mu$.

We proceed by induction on the round number. Suppose $w$ rejects $f$ in the first round. Then there is another firm $f_1$ that $w$ strictly prefers to $f$. Since firms propose to their most preferred uncrossed workers, and this is the first round, there is no worker $w'$ such that $f_1$ prefers to $w$; i.e., $w >_{f_1} w'$ or $w \sim_{f_1} w'$. In matching $\mu$, where $w$ is matched to $f$, this implies that $(w, f_1)$ is a strong blocking pair.

Suppose $w$ rejects $f$ in the $k$th round. Then there is a firm $f_k$ whom $w$ strictly prefers to $f$. One of the following must hold:
\begin{enumerate}
    \item $w >_{f_k} \mu(f_k)$ or $w \sim_{f_k} \mu(f_k)$: Then $(w, f_k)$ is a strong blocking pair.
    \item $\mu(f_k) >_{f_k} w$: Then $f_k$ must have proposed to $\mu(f_k)$ in a previous round and been rejected. By the induction hypothesis, this leads to a different strong blocking pair $(w', f')$ for $\mu$, implying that $\mu$ is not stable.
\end{enumerate}
\end{proof}

\begin{lemma}
\label{lem:n_0_firm_bp2}
\Cref{alg:daalgorithm_firm_one_side} terminates.
\end{lemma}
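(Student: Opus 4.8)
The plan is to exhibit a monotone, bounded potential that strictly increases on every non-terminating pass through the loop. Concretely, I would let $\Phi$ denote the total number of (firm, worker) crossings accumulated so far, i.e. $\Phi = \sum_{f \in F} |\{w : w \text{ has been crossed off } f\text{'s DAG}\}|$. Step (c) only ever \emph{adds} crossings, so $\Phi$ is non-decreasing, and since each of the $n$ firms can cross off at most $n$ workers we have $0 \le \Phi \le n^2$. If I can show that $\Phi$ strictly increases in every iteration after which the loop does not halt, then there can be at most $n^2$ such iterations, so the loop runs at most $n^2 + 1$ times and the algorithm terminates in $O(n^2)$ rounds.

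The key step is to argue that a continuing iteration always produces at least one new crossing. Suppose the loop executes an iteration and is about to execute another; then after this iteration neither stopping condition holds. Because the condition ``some firm is rejected by all workers'' fails afterwards, no firm has an empty DAG at the end of the iteration, and since DAGs only shrink every firm also had a non-empty DAG at the start, so in step (a) every firm proposed to at least one uncrossed worker. Because the condition ``all firms receive an acceptance'' also fails, there is some firm $f^*$ all of whose proposals were rejected in step (b). As $f^*$ proposed to at least one worker yet received no acceptance, at least one worker rejected $f^*$, and step (c) records this as a crossing. This crossing is genuinely new: had a worker rejected $f^*$ in an earlier round, $f^*$ would already have removed that worker from its DAG and so would not have proposed to it again. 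Hence $\Phi$ strictly increases on every continuing iteration, which is exactly what the potential argument needs.

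The main obstacle---really the only delicate point---is wiring the two stopping conditions together to force a crossing: one must use the failure of ``all firms accepted'' to produce a firm $f^*$ with no acceptance, and the failure of ``some firm rejected by all'' to guarantee that $f^*$ actually \emph{made} a proposal, so that its lack of an acceptance can only be explained by a fresh rejection. A convenient sanity check is the contrapositive: if an iteration produces no crossing, then no worker rejected anyone, so every proposing firm had all of its proposals accepted; since every firm proposes (all DAGs nonempty), every firm receives an acceptance and the loop halts by the first stopping condition. I would present the potential-function bound as the main line and optionally include this contrapositive as a cross-check, then conclude termination by combining $\Phi \le n^2$ with strict increase on each continuing pass.
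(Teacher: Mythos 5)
Your proof is correct and follows essentially the same strategy as the paper's: both bound the number of rounds by the at most $n^2$ possible rejections, after showing that every round in which the algorithm does not halt produces at least one new rejection. The only difference is presentational --- you derive the per-round rejection directly from the failure of the two stopping conditions (packaged as a potential function), whereas the paper does a case analysis on the total number of proposals in a round; your route is, if anything, slightly tighter, since it avoids the paper's unstated case of exactly $n$ proposals landing on fewer than $n$ distinct workers.
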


\begin{proof}[Proof of \Cref{lem:n_0_firm_bp2}]
For any round of the algorithm, one of the following holds:
\begin{enumerate}
    \item Fewer than $n$ proposals are made in round $k$: Then some firm $f$ has no uncrossed workers left to propose to in its list (DAG). The algorithm terminates before this round and outputs $\boxtimes$.
    \item Exactly $n$ proposals are made to $n$ distinct workers: Then all workers receive a single proposal, and either a rejection occurs or the algorithm terminates at the end of the round with a perfect matching.
    \item More than $n$ proposals are made: Then at least one worker $w$ receives multiple proposals, rejects at least one, and at least one firm crosses $w$ off its list.
\end{enumerate}
Thus, in every round, there is either a rejection or the algorithm terminates. Since there are at most $O(n^2)$ rejections, the algorithm terminates in $\poly(n)$ rounds.
\end{proof}

\begin{lemma}
\label{lem:n_0_firm_bp3}
If $X$ admits a strongly stable matching, then \Cref{alg:daalgorithm_firm_one_side} finds one.
\end{lemma}

\begin{proof}[Proof of \Cref{lem:n_0_firm_bp3}]
Suppose there exists a strongly stable matching, but the algorithm terminates by outputting $\boxtimes$. This only happens when a firm is rejected by all workers. However, by \Cref{lem:n_0_firm_bp}, this implies that the firm has no feasible partner in any strongly stable matching, which is a contradiction.
\end{proof}

\begin{lemma}
\label{lem:n_0_firm_bp4}
If \Cref{alg:daalgorithm_firm_one_side} returns a matching, it must be strongly stable under $X$.
\end{lemma}

\begin{proof}[Proof of \Cref{lem:n_0_firm_bp4}]
Assume not. Let $(w, f)$ be a blocking pair to the returned matching. Then $f$ must have proposed to $w$ during the algorithm and been rejected. Thus, $w$ must be matched to a more preferred firm at the end of the algorithm, implying $(w, f)$ is not a strong blocking pair—a contradiction.
\end{proof}

\begin{lemma}
\label{lem:n_0_firm_bp5}
If \Cref{alg:daalgorithm_firm_one_side} returns a matching, it must be the firm-optimal strongly stable matching.
\end{lemma}

\begin{proof}[Proof of \Cref{lem:n_0_firm_bp5}]
If not, then some firm $f$ has a better partner in the firm-optimal matching. But then, that firm must have proposed to that partner and been rejected during the algorithm, which leads to a contradiction.
\end{proof}

Note that all the above proofs hold for any compound instance $X$ that has partial order preferences on one side and complete order preferences on the other. Thus, the setting naturally generalizes beyond just two instances, as observed in \Cref{thm:n_1_multiple}. Also note that both algorithms behave exactly like the original Deferred Acceptance algorithm when $X$ has complete order preferences on both sides; that is, when instances $A$ and $B$ are the same.

\end{document}